\newcommand{\eps}{\varepsilon}
\renewcommand{\c}[1]{\ensuremath{\EuScript{#1}}}
\renewcommand{\b}[1]{\ensuremath{\mathbb{#1}}}
\newcommand{\IP}[2]{\ensuremath{ \langle #1 , #2 \rangle}}
\newcommand{\wid}{\omega}
\def\grid{\mathbb{G}}
\def\hcube{\mathbb{H}}
\def\excube{\mathbb{C}}
\def\naturals{\mathbb{N}}
\def\A{\EuScript{A}}
\def\G{\EuScript{G}}
\def\kernel{\EuScript{K}}
\def\wkernel{\EuScript{L}}
\def\nbr{\psi}
\def\inbr{\varphi}
\def\span{\mathop{\mathrm{span}}}
\def\ceil#1{\lceil#1\rceil}
\newcommand{\conv}[1]{\mathop{\mathrm{conv}}(#1)}
\newcommand{\opt}[2]{\kappa(#1,#2)}
\newcommand{\dual}[1]{{#1}^*}
\newcommand{\polyt}{\EuScript{P}}
\newcommand{\ball}{\mathbb{B}}
\newcommand{\intr}{\mathop{\mathrm{intr}}}
\newcommand{\fp}{\bar{p}}
\newcommand{\fq}{\bar{q}}
\newcommand{\fDelta}{\bar{\Delta}}
\newcommand{\fr}{\bar{r}}
\newcommand{\point}[3]{{#1}[#3,#2]}
\newcommand{\rotate}[2]{{#1}[#2]}
\newcommand\PPic[2]{
    \begin{minipage}{#1}%
        \hspace{-0.4cm}
        \makebox[0cm][l]{\includegraphics{#2}}
    \end{minipage}}
\newlength{\ppicwd}
\title{Stability of $\eps$-Kernels}
\author{Pankaj K. Agarwal
\and 
Jeff M. Phillips
\and 
Hai Yu}
\begin{document}
\begin{titlepage}
\maketitle

\begin{center} \today \end{center}

\begin{abstract}
Given a set $P$ of $n$ points in $\b{R}^d$, an $\eps$-kernel $K \subseteq P$ 
approximates the directional width of $P$ in every direction within a relative 
$(1-\eps)$ factor.  In this paper we study the stability of 
$\eps$-kernels under dynamic insertion and deletion of points to $P$ and by changing the approximation factor $\eps$.
In the first case, we say an algorithm for dynamically
maintaining a $\eps$-kernel is stable if at most $O(1)$ points
change in $K$ as one point is inserted or deleted from $P$.
We describe an algorithm to maintain an $\eps$-kernel of 
size $O(1/\eps^{(d-1)/2})$ in $O(1/\eps^{(d-1)/2} + \log n)$ 
time per update.  Not only does our algorithm maintain a
stable $\eps$-kernel, its update time is faster than any known
algorithm that maintains an $\eps$-kernel of size
$O(1/\eps^{(d-1)/2})$. 
Next, we show that if there is an $\eps$-kernel of $P$ of
size $\kappa$, which may be dramatically less than 
$O(1/\eps^{(d-1)/2})$, then there is 
an $(\eps/2)$-kernel of $P$ of size 
$O(\min\{ 1/\eps^{(d-1)/2},
	\kappa^{\lfloor d/2 \rfloor} \log^{d-2} (1/\eps)\})$.
Moreover, 
there exists a point set $P$ in $\reals^d$ and a parameter
$\eps > 0$ such that if every $\eps$-kernel of $P$ has size at
least $\kappa$, then any $(\eps/2)$-kernel of $P$ has size 
$\Omega(\kappa^{\lfloor d/2 \rfloor})$.
\footnote{
Research supported by 
    subaward CIF-32 from NSF grant 0937060 to CRA, 
    by NSF under grants
    CNS-05-40347, CFF-06-35000, and DEB-04-25465, by ARO grants
    W911NF-04-1-0278 and W911NF-07-1-0376, by an NIH grant
    1P50-GM-08183-01, by a DOE grant OEG-P200A070505, and by a grant
    from the U.S.--Israel Binational Science Foundation.}
\end{abstract}
\end{titlepage}


\section{Introduction}
\label{sec:intro}

With recent advances in sensing technology, massive geospatial
data sets are being acquired at an unprecedented rate in many application
areas, including GIS, sensor networks, robotics, and spatial databases. Realizing the full potential of these data 
sets requires developing scalable algorithms for 
analyzing and querying them.  Among many interesting
algorithmic developments to meet this challenge,  there is an
extensive amount of work on computing a ``small summary'' of large 
data sets that preserves certain desired properties of the input data 
and on obtaining a good trade-off between the quality of the summary
and its size.  A coreset is one example of such approximate summaries.
Specifically, for an input set $P$ and a function $f$, 
a \emph{coreset} $C \subseteq P$ is a subset of $P$ (with respect to $f$)
with the property that $f(C)$ approximates $f(P)$. If a small-size coreset
$C$ can be computed quickly (much faster than computing $f(P)$), then 
one can compute an approximate value of $f(P)$ by first
computing $C$ and then computing $f(C)$. This 
coreset-based approach has been successfully used in a wide range of 
geometric optimization problems over the last decade.  See~\cite{AHV07} for a survey.

\paragraph{$\eps$-kernels.} Agarwal~\etal~\cite{AHV04} introduced the notion
of $\eps$-kernels and proved that it is a coreset for many functions.
For any direction $u \in \b{S}^{d-1}$, let 
$P[u] = \arg \max_{p \in P} \IP{p}{u}$ be the extreme point in $P$ 
along $u$;  $\wid(P,u) = \IP{P[u] - P[-u]}{u}$ is called 
the \emph{directional width} of $P$ in direction $u$.  
For a given $\eps>0$, $K \subset P \subset \b{R}^d$ is called 
an \emph{$\eps$-kernel} of $P$ if 
$$
\IP{P[u] - K[u]}{u} \leq \eps \wid(P,u)
$$
for all directions $u \in \b{S}^{d-1}$.\footnote{%
This is a slightly stronger version of the definition than defined in 
\cite{AHV04} and an $\eps$-kernel $K$ gives a relative $(1+2\eps)$-approximation of $\wid(P,u)$ for all $u \in \b{S}^{d-1}$ (i.e. $\wid(K, u) \leq \wid(P,u) \leq (1+2\eps) \wid(K,u)$).}
For simplicity, we assume $\eps \in (0, 1)$, because for $\eps \geq 1$, 
one can choose a constant number of points to form an $\eps$-kernel, and we assume $d$ is constant.  
By definition, if $X$ is an $\eps$-kernel of $P$ and $K$ is a $\delta$-kernel
of $X$, then $K$ is a $(\delta+\eps)$-kernel of $P$. 

Agarwal~{et~al.}~\cite{AHV04} showed that there exists
an $\eps$-kernel of size $O(1/\eps^{(d-1)/2})$ and it can be
computed in time $O(n+1/\eps^{3d/2})$. The running time was improved 
by Chan~\cite{Cha06} to $O(n + 1 / \eps^{d-3/2})$ (see
also~\cite{YAPV04}).
In a number of applications, the input point set is being updated
periodically, so algorithms have also been developed to
maintain $\eps$-kernels dynamically.
Agarwal~\etal~\cite{AHV04} had described a data structure to maintain
 an $\eps$-kernel of size $O(1/\eps^{(d-1)/2})$ 
in $(\log (n)/\eps)^{O(d)}$ time per update.
The update time was recently improved by Chan~\cite{Cha08} to
$O((1/\eps^{(d-1)/2})\log n + 1/\eps^{d-3/2})$. His approach can 
also maintain an $\eps$-kernel of size $O((1/\eps^d)\log n)$ with 
update time $O(\log n)$.  If only insertions are allowed (e.g.\ in a streaming model),
the size of the data structure can be improved to 
$O(1/\eps^{(d-1)/2})$~\cite{AY07,ZZ08}.

In this paper we study two problems related to the \emph{stability} of 
$\eps$-kernels: how $\eps$-kernels change as we update the input set or vary
the value of $\eps$.

\paragraph{Dynamic stability.}
Since the aforementioned dynamic algorithms for maintaining an
$\eps$-kernel focus on minimizing the size of the
kernel, changing a single point in the input set $P$ may 
drastically change the resulting kernel.  This is particularly 
undesirable when the resulting kernel is used to build 
a dynamic data structure for maintaining another information.
For example, kinetic data structures (KDS) based on coresets have 
been proposed to maintain various
extent measures of a set of moving points~\cite{AHV07}. If an
insertion or deletion of an object changes the entire summary,
then one has to reconstruct the entire KDS instead of locally
updating it. In fact, many other dynamic data structures for maintaining
geometric summaries also suffer from this undesirable 
property~\cite{BCEG04,HS08,STZ04}.

We call an $\eps$-kernel
\emph{$s$-stable} if the insertion or deletion of a point 
causes the $\eps$-kernel to change by at most $s$ points.
For brevity, if $s=O(1)$, we call the $\eps$-kernel to be
\emph{stable}. Chan's dynamic algorithm can be adapted to maintain
a stable $\eps$-kernel of size $O((1/\eps^{d-1})\log n)$; see 
Lemma~\ref{lemma:chan} below.
An interesting question is whether there is an efficient algorithm 
for maintaining a stable $\eps$-kernel of size
$O(1/\eps^{(d-1)/2})$, as points are being inserted or
deleted.  Maintaining a stable $\eps$-kernel dynamically 
is difficult for two main reasons.  
First, for an input set $P$, many algorithms compute $\eps$-kernels in 
two or more steps. They first construct a large $\eps$-kernel 
$K^\prime$ (e.g. see~\cite{AHV04,Cha08}), and then use a more 
expensive algorithm to create a small 
$\eps$-kernel of $K^\prime$.  However, if the first algorithm is unstable, then $K^\prime$ may change completely each 
time $P$ is updated.  
Second, all of the known $\eps$-kernel algorithms rely on first finding 
a ``rough shape'' of the input set $P$ (e.g., finding a small
box that contains $P$), estimating its fatness~\cite{BH01}.  
This rough approximation is used crucially in the computation
of the $\eps$-kernel.  However, this shape is itself very unstable 
under insertions or deletions to $P$.  Overcoming these
difficulties, we prove the following in
Section~\ref{sec:dynamic}:

\begin{theorem}
\label{theo:dynamic}
Given a parameter $0 \leq \eps \leq 1$, 
a stable $\eps$-kernel of size $O(1/\eps^{(d-1)/2})$ of a set of $n$ points in $\b{R}^d$ can be maintained under insertions and deletions in $O(1/\eps^{(d-1)/2} + \log n)$ time.
\end{theorem}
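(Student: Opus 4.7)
The plan is a two-stage dynamic construction. \textbf{Stage~1:} use the adaptation of Chan's algorithm from Lemma~\ref{lemma:chan} to maintain an $O(1)$-stable $(\eps/2)$-kernel $K_0 \subseteq P$ of size $O((1/\eps^{d-1}) \log n)$ in $O(\log n)$ time per insertion or deletion. \textbf{Stage~2:} apply a stable reduction that converts $K_0$ into an $(\eps/2)$-kernel $K \subseteq K_0$ of size $O(1/\eps^{(d-1)/2})$. By the composition property of $\eps$-kernels noted earlier, $K$ is then an $\eps$-kernel of $P$, and the total update time becomes $O(\log n) + O(1/\eps^{(d-1)/2})$, as required.

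For Stage~2 I would adapt the standard grid-on-fat-box construction of~\cite{AHV04}. Maintain a constant-size anchor set $\anchor \subseteq K_0$ whose convex hull approximates the shape of $K_0$ to within a constant factor, and from $\anchor$ derive an affine transformation $T$ under which $T(K_0)$ is fat and contained in a constant multiple of $[-1,1]^d$. Fix a canonical decomposition of $\b{S}^{d-1}$ into $O(1/\eps^{(d-1)/2})$ angular cells $\{c_i\}$ with representative directions $\{u_i\}$, and let $K$ consist, for each $i$, of the point of $K_0$ that is extremal in direction $u_i$ in the transformed coordinate system (with a deterministic tie-breaking rule). Because the cells depend only on $T$ and $K_0$ is itself $O(1)$-stable, each update to $K_0$ changes the extremal point of $O(1)$ cells, so $K$ is $O(1)$-stable. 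Scanning all cells in the transformed space to determine which are improved by the incoming point costs $O(1/\eps^{(d-1)/2})$ per update; deletions are handled by maintaining an auxiliary per-cell structure so that a new extremal can be recovered in $O(\log n)$ time.

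The main obstacle is that $\anchor$, and therefore $T$, is \emph{not} continuously stable: a single insertion can replace an anchor and force every cell's representative to be recomputed. I would address this with an epoch-based scheme. Fix $T$ throughout an epoch, and trigger a new epoch only when the widths of $T(K_0)$ in the canonical coordinate directions have drifted by more than a small constant factor---an event requiring $\Omega(1/\eps^{(d-1)/2})$ updates to $K_0$. The full rebuild at each epoch boundary costs $O(1/\eps^{(d-1)/2})$, which I would deamortize by building the next epoch's kernel incrementally in the background and swapping $O(1)$ of its representatives into the visible $K$ per update. The delicate part will be showing that this interleaved rebuild preserves both the $\eps$-kernel property and the $O(1)$-stability across the transition; this should follow because two consecutive transformations differ only by a constant factor, so their angular grids refine each other with only $O(1)$ representatives differing per cell, permitting a clean handoff.
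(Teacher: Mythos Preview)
Your two-stage plan has the right outline, but two of the load-bearing claims in Stage~2 are false, and they are exactly the obstacles the paper works hardest to overcome.

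First, the assertion that ``each update to $K_0$ changes the extremal point of $O(1)$ cells'' does not follow from $K_0$ being $O(1)$-stable. A single point of $K_0$ can be extremal in $\Theta(1/\eps^{(d-1)/2})$ of your directions at once (picture the apex of a cone), so deleting it may force that many distinct replacements in $K$. The paper avoids this with a non-obvious iterative assignment (Lemma~\ref{lem:strong-fixed}): instead of letting each grid point $b_i$ grab the nearest point of $L_f$, the $b_i$ are processed in an order and each takes the nearest \emph{not-yet-assigned} point; on a deletion the affected $b_i$ is moved to the end of the order. This enforces the injectivity property (P1) and bounds the change in $K_f$ to $O(1)$ per update---something your extremal-in-direction scheme cannot do.

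Second, your epoch trigger is unjustified: the width of $T(K_0)$ can double after a \emph{single} insertion (a point far outside the current box) or collapse after a single deletion (of the anchor itself), so there is no reason an epoch lasts $\Omega(1/\eps^{(d-1)/2})$ updates. The paper's remedy is to peel $m$ nested layers of anchors $A_1,\ldots,A_m$ at the start of each epoch and put all of $\A=\bigcup_i A_i$ into the kernel as a shield; since each update touches at most one point of $\A$, some layer $A_i$ survives intact for $m$ updates and certifies fatness of the inner set throughout (Lemma~\ref{lem:anchor}). Note also that a rebuild must scan all of $K_0$, which has $\Theta((1/\eps^{d-1})\log n)$ points, not $O(1/\eps^{(d-1)/2})$; amortizing that over an epoch of length $1/\eps^{(d-1)/2}$ already overshoots the target. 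This is why the paper uses \emph{three} stages rather than two: an intermediate application of Lemma~\ref{lem:stable-outter}(b) first shrinks Chan's kernel to size $O(1/\eps^{d-1})$, and only then is the fine $O(1/\eps^{(d-1)/2})$ reduction applied.
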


Note that the update time of maintaining an
$\eps$-kernel of size $O(1/\eps^{(d-1)/2})$ is better than
that in~\cite{Cha08}.
\paragraph{Approximation stability.}
If the size of an $\eps$-kernel $K$ is $O(1/\eps^{(d-1)/2})$,
 then decreasing $\eps$ changes $K$ quite predictably.
However, this is the worst-case bound, and
it is possible that the size of $K$ may be quite small, e.g.,
$O(1)$, or in general much smaller than the $1/\eps^{(d-1)/2}$ maximum
(efficient algorithms are known for computing
$\eps$-kernels of near-optimal size~\cite{AHV07}).  Then 
how much can the size increase as we reduce the allowable
error from $\eps$ to $\eps/2$?  
For any $\eps > 0$, let $\opt{P}{\eps}$ denote the minimum size of an  
$\eps$-kernel of $P$. 
Unlike many shape simplification problems, in which the size of
simplification can change drastically as we reduce the value
of $\eps$, we show (Section~\ref{sec:approx}) that 
this does not happen for $\eps$-kernels and that
$\opt{P}{\eps/2}$ can be expressed in terms of $\opt{P}{\eps}$. 

\begin{theorem}
\label{theo:approx}
    For any point set $P$ and for any $\eps>0$, 
    $$\opt{P}{\eps/2} = O(
	\min \{ \opt{P}{\eps}^{\lfloor d/2 \rfloor} \log
		^{d-2} (1/\eps), 1/\eps^{(d-1)/2} \}).$$
Moreover, there exist a point set $P$ and some $\eps>0$ such that 
   $\opt{P}{\eps/2} =\Omega (\opt{P}{\eps}^{\lfloor d/2 \rfloor})$.
\end{theorem}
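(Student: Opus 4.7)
I would prove the upper and lower bounds separately. The $O(1/\eps^{(d-1)/2})$ piece of the upper bound is immediate: an $(\eps/2)$-kernel of this size always exists by the construction of Agarwal et al.~\cite{AHV04}, regardless of $\opt{P}{\eps}$. So the substantive task is to show $\opt{P}{\eps/2} = O(\kappa^{\lfloor d/2 \rfloor} \log^{d-2}(1/\eps))$ where $\kappa := \opt{P}{\eps}$, and then to exhibit a matching lower-bound example.

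For the upper bound, let $K$ be an optimal $\eps$-kernel of $P$ and set $Q := \conv{K}$. By McMullen's Upper Bound Theorem, $Q$ has at most $O(\kappa^{\lfloor d/2 \rfloor})$ facets. Because $K$ is an $\eps$-kernel, every point of $P$ lies ``just outside'' $Q$: for each facet $f$ of $Q$ with outward normal $u_f$, the portion $C_f$ of $P$ strictly on the exterior side of the hyperplane supporting $f$ is contained in a slab of thickness at most $\eps\, \wid(P,u_f)$ above $f$. My plan is to refine $K$ facet by facet: on each $f$, erect an exponential grid scaled with respect to the slab thickness in the normal direction and to the diameter of $f$ in the $(d-1)$ transverse directions, and include in the refined kernel one point of $C_f$ that is extremal in direction $u_f$ from each nonempty grid cell. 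Because the slab is $\eps$-thin relative to its lateral extent, a careful cell count shows that $O(\log^{d-2}(1/\eps))$ cells per facet suffice, rather than the $O(1/\eps^{(d-1)/2})$ that a generic face of a bounding box would demand. Correctness for a direction $u$ near $u_f$ follows by decomposing the error $\IP{P[u]-k}{u}$ into a transverse component (bounded by the grid resolution) and an along-$u_f$ component (bounded by the slab thickness together with $K$'s existing $\eps$-approximation); each piece is at most $(\eps/4)\,\wid(P,u)$. Adding $K$ and all chosen representatives gives total size $O(\kappa^{\lfloor d/2 \rfloor} \log^{d-2}(1/\eps))$.

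For the lower bound, I would build $P \subset \b{R}^d$ by starting with a set $V$ of $\kappa$ points in sufficiently general position on the unit sphere (e.g., on the moment curve in $\b{R}^d$), chosen so that $\conv{V}$ is combinatorially a cyclic polytope with $\Theta(\kappa^{\lfloor d/2 \rfloor})$ facets. For each facet $f$ of $\conv{V}$ with outward normal $u_f$, add one ``bump'' point $b_f$ just outside $f$ at height exactly $\eps\, \wid(V,u_f)$ in direction $u_f$, and take $P = V \cup \{b_f : f \text{ a facet of } \conv{V}\}$. Two claims do the job: (i) $\opt{P}{\eps} = \Theta(\kappa)$, because $V$ is itself an $\eps$-kernel of $P$ (each bump sticks out by exactly $\eps\,\wid$), while a careful choice of the $v_i$ prevents any proper subset of $V$ from approximating the widths in the vertex-extremal directions; and (ii) $\opt{P}{\eps/2} = \Omega(\kappa^{\lfloor d/2 \rfloor})$, because any $(\eps/2)$-kernel must contain, for each facet $f$, a point within $(\eps/2)\,\wid(P,u_f)$ of $b_f$ in direction $u_f$, and the bumps can be arranged so no single point of $P$ can simultaneously serve two facets.

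The main technical obstacle in the upper bound is establishing that $O(\log^{d-2}(1/\eps))$ grid cells per facet really do suffice: this reduces to a geometric approximation lemma about convex bodies in thin slabs and requires careful bookkeeping of aspect ratios across the facets of $Q$, which may themselves have widely varying shapes. In the lower bound, the subtlety lies in calibrating the bump heights and facet normals so that (i) and (ii) hold simultaneously, which ultimately pins down a careful choice of both the moment curve's scale and the value of $\eps$.
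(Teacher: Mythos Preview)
Your lower bound construction is essentially the paper's: take the vertices of a (fattened) cyclic polytope and add one bump point just outside each facet, calibrated so the vertex set alone is an $\eps$-kernel but every bump must survive in any $(\eps/2)$-kernel.

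For the upper bound you have the right skeleton---Upper Bound Theorem on $Q=\conv{K}$, then a per-face refinement contributing polylogarithmic points---but two genuine pieces are missing. First, you work only with \emph{facets} and with directions ``near $u_f$.'' A direction $u$ whose maximum over $Q$ is attained on a lower-dimensional face lies in the relative interior of that face's normal cone and need not be close to any facet normal, so your slab $C_f$ gives no control there. The paper instead decomposes $\polyt_\eps\setminus\intr\polyt$ into a region $\sigma(f)$ for \emph{every} face $f$ of $\polyt$ (the total face count is still $O(\kappa^{\lfloor d/2\rfloor})$, so this is free combinatorially); for a $k$-face with $k<d-1$ the dual $\dual{f}$ is positive-dimensional, and one must additionally discretize over $\dual{f}$ with a constant-size $\theta$-net of directions together with a constant set of height layers.

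Second---and this is exactly the obstacle you flag but do not resolve---a single exponential grid on a $(d-1)$-facet does not yield $\log^{d-2}(1/\eps)$; a product grid in the $d-1$ transverse directions would naturally give $\log^{d-1}$. The paper obtains the correct exponent via a \emph{recursive} exponential grid: on a $k$-simplex choose a support point $r$ of maximal height, lay down an exponential family of $O(\log(1/\eps))$ homothetic copies of each boundary $(k-1)$-facet converging toward $\bar r$, and recurse on those copies until reaching $1$-simplices, which are handled with $O(1)$ points each. This gives $O(\log^{k-1}(1/\eps))$ points per $k$-face, hence $\log^{d-2}$ at the top level. Correctness is proved by tracing a chain of helper points $q_k,q_{k-1},\ldots,q_m$ down the recursion, each step losing $O(\delta)$ in height and $O(\theta)$ in angle, with the accumulated loss bounded by $\eps/2$. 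Neither the recursion nor this chaining argument appears in your sketch, and the thin-slab heuristic alone does not deliver the $d-2$ exponent.
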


\section{Dynamic Stability}
\label{sec:dynamic}

In this section we describe an algorithm that proves 
Theorem~\ref{theo:dynamic}. The algorithm is composed of a sequence of 
modules, each with certain property.  
%
%
%
%
We first define the notion of anchor points and fatness of a 
point set and describe two algorithms for maintaining 
stable $\eps$-kernels with respect to a fixed anchor:
one of them maintains a kernel of size $O(1/\eps^{d-1})$ and the other 
of size $O(1/\eps^{(d-1)/2})$; the former has smaller update time.
Next, we briefly sketch how Chan's algorithm~\cite{Cha08} can be adapted to maintain a stable $\eps$-kernel of size $O(1/\eps^{(d-1)}\log n)$. 
Then we describe the algorithm for updating anchor points and 
maintaining a stable kernel as the anchors change. Finally, we put these 
modules together to obtain the final algorithm.
We make the following simple observation, which will be crucial for combining different modules.

\begin{lemma}[\textsf{Composition Lemma}]
If $K$ is an $s$-stable $\eps$-kernel of $P$ and $K^\prime$ is an $s^\prime$-stable $\eps^\prime$-kernel of $K$, then $K^\prime$ is an $(s \cdot s^\prime)$-stable $(\eps + \eps^\prime)$-kernel of $P$.
\label{lem:chain}
\end{lemma}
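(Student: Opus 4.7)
The plan is to prove the two claims of the lemma separately: that $K'$ is an $(\eps+\eps')$-kernel of $P$, and that it is $(s \cdot s')$-stable.

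For the approximation claim, I would simply invoke the observation stated earlier in the paper: if $X$ is an $\eps$-kernel of $P$ and $K$ is a $\delta$-kernel of $X$, then $K$ is a $(\delta+\eps)$-kernel of $P$. Applying this with $X=K$ and $\delta=\eps'$ immediately gives that $K'$ is an $(\eps+\eps')$-kernel of $P$. This part is essentially a one-line invocation and is not the obstacle.

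For the stability claim, the approach is to view the second algorithm as reacting to the ``updates'' produced by the first. Suppose a single point is inserted into or deleted from $P$. By the $s$-stability of $K$ with respect to $P$, the set $K$ changes by at most $s$ points; more precisely, there is a sequence of at most $s$ insertions and deletions to $K$ that transforms the old $K$ into the new one. Now feed these at most $s$ updates one at a time into the maintenance algorithm producing $K'$ from $K$. By the $s'$-stability of $K'$ with respect to $K$, each such update to $K$ causes at most $s'$ points to change in $K'$. Summing over the at most $s$ updates yields at most $s \cdot s'$ total changes in $K'$, which is exactly $(s\cdot s')$-stability.

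The main subtlety, and thus the only place that requires a moment of care, is justifying that the two stability bounds compose in this sequential way. Specifically, one must check that the definition of $s'$-stability applies to \emph{each} intermediate state of $K$ along the chain of updates, not just to the initial and final ones, so that the changes can be counted update-by-update rather than all at once. Since $s'$-stability is defined for arbitrary insertions or deletions into the input set of $K'$, and the intermediate states of $K$ are valid such input sets, this composition is legitimate; no further ingredient is needed. Overall the proof is short and essentially bookkeeping, but the Composition Lemma's utility in the rest of Section~\ref{sec:dynamic} justifies stating it explicitly.
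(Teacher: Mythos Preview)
Your proposal is correct and matches the natural argument; in fact the paper does not give a proof of this lemma at all, treating it as a ``simple observation,'' so there is nothing further to compare against. Your handling of both the $(\eps+\eps')$ approximation bound (via the earlier transitivity remark) and the $(s\cdot s')$ stability bound (by feeding the at most $s$ updates to $K$ sequentially into the second algorithm) is exactly what is intended.
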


\paragraph{Anchors and fatness of a point set.}
We call a point set $P$ \emph{$\beta$-fat} if 
$$
\max_{u,v \in \b{S}^{d-1}} \wid(P,u)/\wid(P,v) \leq \beta.
$$
If $\beta$ is a constant, we sometimes just say that $P$ is \emph{fat}.  
An arbitrary point set $P$ can be made fat by applying an affine 
transform: we first choose a set of $d+1$ \emph{anchor points} 
$A = \{a_0, a_1, \ldots, a_d\}$ using the following procedure of 
Barequet and Har-Peled~\cite{BH01}.  Choose $a_0$ arbitrarily.  
Let $a_1$ be the farthest point from $a_0$.  Then inductively, 
let $a_i$ be the farthest point from the flat 
$\span (a_0, \ldots, a_{i-1})$.  (See Figure \ref{fig:transformTA}.)
The anchor points $A$ define a bounding box $I_A$ with center at $a_0$ 
and orthogonal directions defined by vectors from the flat 
$\span(a_0, \ldots, a_{i-1})$ to $a_i$.  The extents of $I_A$ in each 
orthogonal direction is defined by placing each $a_i$ on a bounding 
face and extending $I_A$ the same distance from $a_0$ in the 
opposite direction.  
Next we perform an affine transform $T_A$ on $P$ such that the vector from the flat $\span(a_0, \ldots, a_{i-1})$ to $a_i$ is equal to $e_i$, where $e_0 = (0,\ldots,0), e_1 = (1,0,\ldots,0), \ldots,$ $e_d = (0, \ldots, 0,1)$.  
This ensures that $T_A(P) \subseteq T_A(I_A)  = [-1,1]^d$.  The next
lemma shows that $T_A(P)$ is fat.

\begin{lemma}
\label{lemma:fat}
For all $u \in \b{S}^{d-1}$ and for $\beta_d \leq 2^d d^{3/2}  d!$,
\begin{equation}
\wid(T_A(A),u) \leq \wid(T_A(P),u) \leq \wid(T_A(I_A), u) \leq \beta_d \cdot \wid(T_A(A),u).
\label{eq:fat}
\end{equation}
\end{lemma}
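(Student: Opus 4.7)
The first inequality is immediate from $A \subseteq P$, hence $T_A(A) \subseteq T_A(P)$. For the second, we show $T_A(P) \subseteq T_A(I_A) = [-1,1]^d$. Let $v_1, \ldots, v_d$ be the mutually orthogonal vectors used to build $T_A$, with $v_i$ the perpendicular from $a_i$ to the flat $\span(a_0,\ldots,a_{i-1})$; the box $I_A$ is exactly $\{a_0 + \sum_i t_i v_i : t_i \in [-1,1]\}$, and $T_A$ sends $a_0 \mapsto 0$ and $v_i \mapsto e_i$. Writing $p - a_0 = \sum_j \beta_j v_j$ for $p \in P$, the anchor rule ``$a_i$ is farthest from $\span(a_0,\ldots,a_{i-1})$'' yields $\sqrt{\sum_{j \geq i} \beta_j^2 |v_j|^2} \leq |v_i|$, and in particular $|\beta_i| \leq 1$, so $T_A(p) = (\beta_1, \ldots, \beta_d) \in [-1,1]^d$.

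The substance lies in the third inequality. Unpacking the transformation, $T_A(a_0) = 0$ and $T_A(a_i) = (\alpha_{i,1}, \ldots, \alpha_{i,i-1}, 1, 0, \ldots, 0)$ for $i \geq 1$, with each $\alpha_{i,j} \in [-1,1]$ since $T_A(a_i) \in [-1,1]^d$. Fix $u = (u_1, \ldots, u_d) \in \b{S}^{d-1}$ and set $w_i = \IP{T_A(a_i)}{u}$. Then $w_0 = 0$ and
\[ w_i = u_i + \sum_{j < i} \alpha_{i,j}\, u_j \qquad (i \geq 1). \]
This triangular system (with unit diagonal) can be inverted to express each $u_i$ as a linear combination of $w_1, \ldots, w_i$; induction on $i$, using $|\alpha_{i,j}| \leq 1$, yields
\[ |u_i| \;\leq\; |w_i| + \sum_{j < i} |u_j| \;\leq\; 2^{i-1} \max_{0 \leq j \leq d} |w_j|. \]

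To finish: $\wid(T_A(I_A), u) = 2\sum_{i=1}^d |u_i|$ (the width of $[-1,1]^d$ in direction $u$), while $\wid(T_A(A), u) \geq \max_j |w_j|$ because $0 = w_0$ belongs to the projected anchor set. Summing the geometric series,
\[ \wid(T_A(I_A), u) \;\leq\; 2(2^d - 1)\, \wid(T_A(A), u), \]
so in fact any $\beta_d = O(2^d)$ suffices, comfortably inside the stated $2^d d^{3/2} d!$. The only real pitfall is keeping the coordinate frames straight; once the transformed anchors are seen to have triangular coordinate form with unit diagonal, the inductive bound on $|u_i|$ drops out mechanically and the chain of inequalities closes.
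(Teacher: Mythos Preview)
Your proof is correct and takes a genuinely different route from the paper's. The paper argues the third inequality via a volume comparison: it bounds $\wid(T_A(I_A),u)$ above by $\sqrt{d}$ (the diameter of $[-1,1]^d$), computes the volume of the simplex $\conv{T_A(A)}$ to be $1/d!$, and then invokes a lemma from Har-Peled's book stating that a convex body in $[0,1]^d$ has minimum width at least $1/d$ times its volume; this yields the stated bound $\beta_d \le 2^d d^{3/2} d!$. Your argument instead exploits the explicit triangular structure of the transformed anchors: since $T_A(a_i)$ has $i$th coordinate equal to $1$ and vanishing coordinates beyond the $i$th, the linear system relating $u_i$ to $w_i=\IP{T_A(a_i)}{u}$ is unit-triangular with off-diagonal entries in $[-1,1]$, and a direct induction gives $|u_i|\le 2^{i-1}\max_j|w_j|$. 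This is more elementary (no external volume--width lemma is needed) and in fact yields the sharper bound $\beta_d\le 2(2^d-1)$, comfortably inside the paper's $2^d d^{3/2} d!$. The paper's approach, on the other hand, is more robust in that it would apply to any anchor simplex of comparable volume, not just one with this specific triangular coordinate form.
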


\begin{proof}
The first two inequalities follow by $A \subset P \subset I_A$.  We can upper bound $\max_u \wid(T_A(I_A),u) = \wid([-1,1]^d,u) \leq \sqrt{d}$.
The volume of the convex hull $\textsf{conv}(T_A(A))$ is $1/d!$ since it is a $d$-simplex and $\IP{T_A(a_0)-T_A(a_i)}{e_i} = 1$ for each direction $e_i$.  
We can then scale $\textsf{conv}(T_A(A))$ by a factor $1/2$ (shrinking the volume by factor $1/2^d$) so it fits in $[0,1]^d$.  Now we can apply a lemma from \cite{HPbook} that the minimum width of a convex shape that is contained in $[0,1]^d$ is at least $1/d$ times its $d$-dimensional volume, which is $1/(2^d d!)$.  
The fatness of $T_A(P)$ follows from the fatness of $T_A(I_A)$.
\end{proof}


\vspace{-.1in}

\begin{figure}[htb]
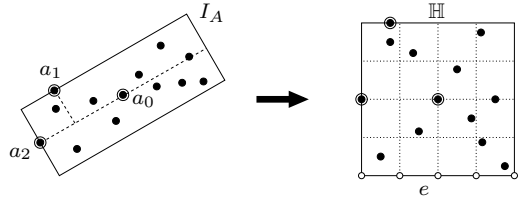

\begin{center}
\input figs/transform
\caption{ Anchor points $A=\{a_0,a_1,a_2\}$, rectangle $I_A$, and 
transform $T_A$ applied to $P$;  square $\hcube$,
two-dimensional grid $\grid$, and one-dimensional grid
$\grid_e$ on the edge $e$ of $\hcube$.}
\label{fig:transformTA}
\end{center}
\end{figure}

Agarwal~\etal~\cite{AHV04} show if $K$ is an $\eps$-kernel of $P$, then 
$T(K)$ is an $\eps$-kernel of $T(P)$ for any affine transform $T$, which 
implies that one can compute an $\eps$-kernel of $T(P)$. We will need 
the following generalization of the definition of $\eps$-kernel.
For two points sets $P$ and $Q$, a subset $K \subseteq P$ is called
an \emph{$\eps$-kernel of $P$ with respect to $Q$} if
$\IP{P[u] - K[u]}{u} \leq \eps \wid(Q,u)$ for all $u \in \sphere^{d-1}$.

\paragraph{Stable $\eps$-kernels for a fixed anchor.}

Let $A$ be a set of anchor points of $P$, as described above. We
describe algorithms for maintaining stable $\eps$-kernels (with respect
to $A$) under the assumption that $A$ remains a set of  anchor points of
$P$, i.e., $A \subseteq P \subset I_A$, as $P$ is being 
updated by inserting
and deleting points. In view of the above discussion, without loss of generality,
we assume $I_A = [-1,+1]^d$ and denote it by $\hcube$. As for the static
case~\cite{AHV04,Cha06}, we first describe a simpler algorithm that
maintains a stable $\eps$-kernel of size $O(1/\eps^{d-1})$, and then a 
more involved one that maintains a stable $\eps$-kernel of size 
$O(1/\eps^{(d-1)/2})$. 

Set $\delta=\eps/\sqrt{d}$ and draw a $d$-dimensional grid $\grid$ inside
$\hcube$ of size $\delta$, i.e., the side-length of each grid cell is
at most $\delta$; $\grid$ has $O(1/\delta^d)$ cells. For each grid
cell $\tau$, let $P_\tau = P \cap \tau$. For a point $x \in \hcube$ lying 
in a grid cell $\tau$, 
let $\hat{x}$ be the vertex of $\tau$ nearest to the origin; we can
view $x$ being \emph{snapped} to the vertex $\hat{x}$.
For each facet $f$ of $\hcube$,
$\grid$ induces a $(d-1)$-dimensional grid $\grid_f$ on $f$; $\grid$ contains
a \emph{column} of cells for each cell in $\grid_f$. For 
each cell $\Delta \in \grid_f$, we choose (at most) one point of $P$ 
as follows:
let $\tau$ be the nonempty grid cell in the column of $\grid$ 
corresponding to $\Delta$ that is closest to $f$. We choose an 
arbitrary point from $P_\tau$;
if there is no nonempty cell in the column, no point is chosen. Let
$L_f$ be the set of chosen points. Set $\wkernel = 
\bigcup_{f \in\hcube} L_f$. Agarwal~\etal~\cite{AHV04} proved that 
$\wkernel$ is an $\eps$-kernel of $P$.  Insertion or deletion of a point
in $P$ affects at most one point in $L_f$, and it can be updated in 
$O(\log (1/\eps))$ time. Hence, we obtain the following:

\begin{lemma}
Let $P$ be a set of $n$ points in $\reals^d$, let 
$A \subseteq P$ be a set of anchor points of $P$, and let $0 < \eps < 1$
be a parameter.  $P$ can be preprocessed in $O(n + 1/\eps^{d-1})$ time,
so that a (2d)-stable $\eps$-kernel of $P$ with respect to $A$ of 
size $O(1/\eps^{d-1})$ can be maintained in $O(\log 1/\eps)$ time 
per update provided that $A$ remains an anchor set of $P$.
\label{lem:weak-fixed}
\end{lemma}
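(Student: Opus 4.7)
The construction is already laid out in the paragraph before the lemma; the task is to verify (i) correctness as an $\eps$-kernel with respect to $A$, (ii) $(2d)$-stability, (iii) the update and preprocessing times. I will work throughout in the transformed coordinates, so $T_A(P) \subseteq \hcube = [-1,1]^d$ and $A$ has width at least $2/\beta_d$ in every direction by Lemma~\ref{lemma:fat}. The object to maintain is $\wkernel = \bigcup_{f} L_f$ where the union is over the $2d$ facets of $\hcube$ and $L_f$ contains one point per nonempty column of $\grid$ orthogonal to $f$, chosen from the cell in that column closest to $f$.

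\textbf{Size and correctness.} For each facet $f$, the induced grid $\grid_f$ has $O((1/\delta)^{d-1}) = O(1/\eps^{d-1})$ cells, so $|\wkernel| = O(1/\eps^{d-1})$. For correctness, fix $u \in \sphere^{d-1}$ and let $f$ be the facet of $\hcube$ whose outer normal has the largest inner product with $u$; let $p = P[u]$. The column $C$ of $\grid$ containing $p$ (with respect to $f$) contributes a point $q \in L_f$ from the cell of $C$ closest to $f$, which in particular lies no farther from $f$ than $p$ does. A short calculation shows $\IP{p - q}{u} \le c_d \delta$ for a constant $c_d$ depending only on $d$, since $q$ and $p$ lie in the same column and $q$ is at least as ``advanced'' towards $f$ as $p$. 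Choosing $\delta = \Theta(\eps)$ and using $\wid(T_A(A),u) \ge 2/\beta_d$ from Lemma~\ref{lemma:fat}, this gives $\IP{P[u] - \wkernel[u]}{u} \le \eps \,\wid(A,u)$, which is the $\eps$-kernel bound with respect to $A$.

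\textbf{Stability.} For each facet $f$, every point of $P$ belongs to exactly one column of $\grid$ orthogonal to $f$. An insertion or deletion of a single point can therefore change the representative of at most one column per facet, i.e.\ at most one point of $L_f$. Summing over the $2d$ facets gives at most $2d$ changes in $\wkernel$, so $\wkernel$ is $(2d)$-stable.

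\textbf{Update time and preprocessing.} For each facet $f$ and each column $C$ in $\grid_f$, maintain a balanced BST keyed by distance to $f$, storing the nonempty cells of $C$ together with an arbitrary representative point per cell. To insert (resp.\ delete) a point $p$: for each of the $2d$ facets, locate the grid cell of $p$ via a direct index computation in $O(1)$ time, then perform at most one insertion/deletion in the corresponding column BST in $O(\log(1/\delta)) = O(\log(1/\eps))$ time; when the cell becomes empty or newly nonempty, update the BST accordingly, which may evict or introduce one representative in $L_f$. Preprocessing consists of binning the $n$ input points ($O(n)$ time with hashing on cell indices) and initializing the $2d \cdot O(1/\eps^{d-1})$ empty column BSTs, for total $O(n + 1/\eps^{d-1})$.

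\textbf{Main obstacle.} The technical content is entirely in the snapping/width estimate of the correctness step; stability and timing are bookkeeping. The only mild subtlety is that the error bound must be phrased with respect to $\wid(A,u)$ rather than $\wid(P,u)$, which is where the fatness guarantee $\wid(T_A(A),u) \ge 2/\beta_d$ from Lemma~\ref{lemma:fat} is essential, letting us absorb all $d$-dependent constants into the choice $\delta = \Theta(\eps)$.
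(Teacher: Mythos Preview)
Your proposal is correct and follows the same approach as the paper: the construction, the per-facet column argument for $(2d)$-stability, and the $O(\log(1/\eps))$ update via a per-column search structure are exactly what the paper uses. The paper simply cites Agarwal~\etal~\cite{AHV04} for the correctness of $\wkernel$ as an $\eps$-kernel rather than reproving it, whereas you spell out the snapping estimate and the role of the fatness bound $\wid(T_A(A),u)\ge 2/\beta_d$ in getting the ``with respect to $A$'' conclusion; this added detail is accurate and the rest matches.
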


Agarwal~\etal~\cite{AHV04} and Chan~\cite{Cha06} have described algorithms for 
computing an $\eps$-kernel of size $O(1/\eps^{(d-1)/2})$. We adapt Chan's 
algorithm to maintain a stable $\eps$-kernel with respect to a 
fixed anchor $A$. We begin by mentioning a result of Chan that lies 
at the heart of his algorithm.

\begin{lemma}[Chan~\cite{Cha06}]
\label{lemma:discrete}
Let $E \in \naturals$, $E^\tau \le F \le E$ for some $0 < \tau < 1$, and
$P \subseteq [0:E]^{d-1}\times\reals$ a set of at most $n$ 
points. For all grid points $b \in [0:F]^{d-1}\times\reals$, the nearest
neighbors of each $b$ in $P$ can be computed in time $O(n + E^{d-2}F)$.
\end{lemma}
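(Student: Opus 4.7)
The plan is to exploit the integer-grid structure of the first $d-1$ coordinates of $P$ via a dimension-by-dimension reduction and bucket sorting. The target bound $O(n + E^{d-2}F)$ breaks naturally into two pieces, where the $n$ accounts for the raw input and the $E^{d-2}F$ accounts for the combinatorial work on the integer grid; I will aim to match each piece separately.

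First I would bucket-sort the points of $P$ according to their first $d-2$ integer coordinates in $[0:E]^{d-2}$, using $d-2$ passes of radix sort. This costs $O(n + E^{d-2})$, which is absorbed by $O(n + E^{d-2}F)$ since $F \ge 1$. The result partitions $P$ into at most $\min(n, E^{d-2})$ nonempty ``columns,'' one for each fixed prefix $(i_1, \ldots, i_{d-2}) \in [0:E]^{d-2}$. Inside each column the residual problem is two-dimensional: the $(d-1)$-th coordinate ranges over $[0:E]$ and the last coordinate over $\reals$. For each column I would sort its points by the $(d-1)$-th coordinate (bucket sort again, amortized $O(|\text{column}| + E)$).

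Next I would process the $F^{d-1}$ query positions $b$ (modulo the free last coordinate) by scanning them in the order induced by their first $d-2$ coordinates. For each query $b$, its nearest neighbor lies in a column whose prefix is within Chebyshev distance $r$ of $b$'s prefix for a radius $r$ determined by the current best candidate; using the standard expanding-grid search one examines only $O((E/F)^{d-2})$ columns in the worst case, yielding an upper bound of $F^{d-1} \cdot (E/F)^{d-2} = E^{d-2} F$ column--query interactions. Within each examined column, the two-dimensional nearest neighbor to $b$'s last two coordinates is found by a constant-time lookup in the presorted column (using the integer structure of the $(d-1)$-th coordinate to hash directly, then walking the sorted list of last coordinates).

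The main obstacle is the charging in the third step: a naive analysis would multiply per-query and per-column costs and overshoot. The key is to argue that when a query processes a column, amortized $O(1)$ work suffices, because repeated queries with nearby prefixes share candidate regions, and the 2D scan inside a column can be advanced monotonically with the global query sweep. Combined with Step~1 this yields the claimed $O(n + E^{d-2}F)$ total time. I would lean on the hypothesis $E^\tau \le F$ only if needed to ensure that the $O(E^{d-2})$ preprocessing term is genuinely dominated by $O(E^{d-2}F)$, so that the ``$+\,n$'' captures all input-sensitive costs cleanly.
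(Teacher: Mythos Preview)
The paper does not prove this lemma at all: it is quoted verbatim as a result of Chan~\cite{Cha06} and used as a black box, so there is no ``paper's own proof'' to compare against. If you want an actual argument you must consult Chan's original paper, whose method is a recursive reduction on the dimension (peeling off one integer coordinate at a time) rather than the column-bucketing plus expanding-grid search you sketch.

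On the substance of your sketch: the first two steps (radix sort into $[0:E]^{d-2}$ columns, then sort each column) are fine and do cost $O(n+E^{d-2})$. The third step, however, is where the real content lies, and your argument there is not a proof. You assert that each query touches $O((E/F)^{d-2})$ columns ``in the worst case,'' but the expanding-grid radius $r$ is data-dependent and can be much larger than $E/F$ for an individual query; you would need a global packing or charging argument to bound the \emph{total} number of column--query interactions by $E^{d-2}F$, not a per-query bound. Likewise, the claim that the in-column two-dimensional nearest neighbor is found in amortized $O(1)$ by ``advancing monotonically with the global query sweep'' is exactly the hard part: the last coordinate is real-valued, the query order in the first $d-2$ coordinates does not induce a monotone order in the column, and nothing you have set up gives a pointer that only moves forward. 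As written this is a plan with the crucial lemma (the amortization) left as an assertion.
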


We now set $\gamma = \sqrt{\eps}/c$ for a constant $c > 1$ to be used in a much sparser grid than with $\delta$. 
Let $\excube =[-2,+2]^d$ and $f$ be a facet of $\excube$. 
We draw a $(d-1)$-dimensional grid on $f$ of size $\gamma$. 
Assuming $f$ lies on the plane $x_d=-2$, we choose a set $B_f = \{ (i_1\gamma,\ldots,i_{d-1}\gamma,-2) \in \b{Z}^d \mid -\ceil{2/\gamma} \le i_1,\ldots,i_{d-1} \le \ceil{2/\gamma}\}$ of grid points.
For a subset $X \subseteq P$ and a point $b$, we define $\nbr(X,b) = \arg\min_{x\in X} \|\hat{x}-b\|$, i.e., the point in $X$ such that the snapped point is nearest to $b$. 
For a set $R$, $\nbr(X,R) = \{ \nbr(X,r) \mid r \in R\}$. 
There is a one to one mapping between the faces of $\excube$ and $\hcube$, so we also use $f$ to denote the corresponding facet of $\hcube$. 
Let $L_f$ be the set of points chosen in the previous algorithm corresponding to facet $f$ of $\hcube$ for computing an $(\eps/2)$-kernel of $P$. 
Set $G_f = \nbr(L_f,B_f)$.  
Chan showed that $\G = \bigcup_{f \in \excube} G_f$ is an $(\eps/2)$-kernel of $\wkernel$ and thus an $\eps$-kernel of $P$. 
Scaling $\grid$ and $B_f$ appropriately and using Lemma~\ref{lemma:discrete}, $G_f$ can be computed in $O(n + 1/\eps^{d-3/2})$ time. Hence, $\G$ can be computed in $O(n+1/\eps^{d-3/2})$ time.

Note that $\nbr(L_f,b)$ can be the same for many points $b\in B_f$, so 
insertion or deletion of a point in $P$ (and thus in $L_f$) may change
$G_f$ significantly, thereby making $\G$ unstable. We circumvent this problem by introducing two new ideas. First, $\nbr(L_f,B_f)$ is computed 
in two 
stages, and second it is computed in an iterative manner. We describe the 
construction and the update algorithm for $f$; the same algorithm is repeated
for all facets. 

We partition $\hcube$ into $O(1/\gamma^{d-1})$ boxes: for 
$J =\langle i_1.\ldots,i_{d-1}\rangle \in [-1/\gamma,1/\gamma]^{d-1} \cap \b{Z}^{d-1}$, 
we define
$\hcube_J = [i_1\gamma,(i_1+1)\gamma]\times \cdots \times [i_{d-1}\gamma,
(i_{d-1}+1)\gamma] \times [-1,+1]$. We maintain a subset
$X \subseteq L_f$. Initially, we set $X=L_f$. Set $X_J = X \cap \hcube_J$.
We define a total order on the points of $B_f$. Initially, we
sort $B_f$ in lexicographic order, but the ordering will change as 
insertions and deletions are performed on $P$.
Let $\langle b_1, \ldots, b_u\rangle$ be the current ordering of $B_f$. We 
define a map $\inbr: B_f \rightarrow L_f$ as follows. Suppose
$\inbr(b_1), \ldots, \inbr(b_{i-1})$ have been defined. Let 
$J_i = \arg\min_J \|\hat{\nbr}(X_J,b_i)-b_i\|$; here 
$\hat{\psi}(\cdot)$ denotes the snapped point of $\psi(\cdot)$. We set
$\inbr(b_i) = \nbr(X_{J_i},b_i)$. We delete $\inbr(b_i)$ from $X$ (and 
from $X_{J_i}$) and recompute $\hat{\nbr}(X_{J_i},B_f)$. Set $K_f = \{\inbr(b) 
\mid b \in B_f\}$ and $\kernel = \bigcup_f K_f$.
Computing $J_i$ and $\inbr(b_i)$ takes 
$O(1/\eps^{(d-1)/2})$ time, and, by Lemma~\ref{lemma:discrete}, 
$\nbr(X_{J_i},B_f)$ can be computed in $O(|X_J| + 1/\gamma^{d-2} \cdot 1/\gamma) = O(1/\eps^{(d-1)/2})$ time.

It can be proved that the map $\inbr$ and the set $K_f$ satisfy the following properties:
\begin{itemize} \denselist
\item[(P1)] $\inbr(b_i) \ne \inbr(b_j)$ for $i \ne j$,
\item[(P2)] $\inbr(b_i) = \nbr(L_f\setminus\{\inbr(b_j)\mid j < i\}, b_i)$,
\item[(P3)] $\displaystyle K_f \supseteq \nbr(L_f,B_f)$.
\end{itemize}
Indeed, (P1) and (P2) follow from the construction, and (P3) follows from
(P2). (P3) immediately implies that $\kernel$ is an $\eps$-kernel of $P$.
Next, we describe the procedures for updating $K_f$ when $L_f$
changes.  These procedures maintain (P1)--(P3), 
thereby ensuring that the algorithm maintains an $\eps$-kernel. 

\textit{Inserting a point.} 
Suppose a point $p$ is inserted into $L_f$. 
We add $p$ to $X$. Suppose $p \in \hcube_J$. We recompute $\nbr(X_J,B_f)$. Next, we update $\inbr(\cdot)$ and 
$\kernel$ as  follows. We maintain a point $\xi \in L_f$. Initially, 
$\xi$ is set to $p$.  Suppose we have processed 
$b_1, \ldots, b_{i-1}$. Let $\eta \in L_f$ be the 
current $\inbr(b_i)$. If $\|\hat{\xi}-b_i\| \le \|\hat{\eta}-b_i\|$, then we 
swap $\xi$ and $\inbr(b_i)$, otherwise neither $\xi$ nor $\inbr(b_i)$ is updated.
We then process $b_{i+1}$.  After processing all points of $B_f$ if $\xi=p$, 
i.e., no $\inbr(b_i)$ is updated, we stop. Otherwise, we add $p$ to $K_f$ and 
delete $\xi$ from $K_f$. 
The insertion procedure makes at most two changes in $K_f$, and it can be 
verified that (P1)-(P3) are maintained.

\textit{Deleting a point.} 
Suppose $p$ is deleted from $L_f$.  Suppose $p \in \hcube_J$. 
If $p \not\in K_f$, then $p \in X$. 
We delete $p$ from $X$ and $X_J$ and recompute $\nbr(X_J,B)$. 
If $p \in K_f$, i.e., there is a $b_i \in B$ with $p=\inbr(b_i)$, then $p \not\in X$.
We delete $p$ from $K_f$ and $\kernel$, recompute $\inbr(b_i)$, and 
add the new $\inbr(b_i)$ to $K_f$.  Let $\inbr(b_i) \in \hcube_J$; we remove $\inbr(b_i)$ from $X_J$ and recompute $\nbr(X_J,B_f)$.
We modify the ordering of $B_f$ by moving $b_i$ from its current position to the end. This is the only place where the ordering of $B_f$ is modified. 
Since $b_i$ is now the last point in the ordering of $B_f$, the new $\inbr(b_i)$ does not affect any other $\inbr(b_j)$. 
The deletion procedure also makes at most two changes in $K_f$ and maintains (P1)--(P3).  

Finally, insertion or deletion of a point in $P$ causes at most one insertion plus one deletion in $L_f$, therefore we can conclude the following:

\begin{lemma}
\label{lem:strong-fixed}
Let $P$ be a set of $n$ points in $\reals^d$, $A$ a set of 
anchor points of $P$, and $0 < \eps < 1$ a parameter. 
$P$ can be preprocessed in $O(n+1/\eps^{d-1})$ time into a 
data structure so that a stable $\eps$-kernel of $P$ with respect to $A$
of size $O(1/\eps^{(d-1)/2})$ can be maintained in 
$O(1/\eps^{(d-1)/2})$ time under insertion and deletion, 
provided that $A$ remains an anchor set of $P$.
\end{lemma}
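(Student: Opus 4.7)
The plan is to invoke the Composition Lemma to stack the facet-by-facet construction on top of Lemma~\ref{lem:weak-fixed}. First, I would maintain a $(2d)$-stable $(\eps/2)$-kernel $\wkernel$ of $P$ with respect to $A$ of size $O(1/\eps^{d-1})$ in $O(\log (1/\eps))$ time per update. Then, for each facet $f$ of $\excube$, I would apply the iterative nearest-neighbor construction described above to $\wkernel$ (playing the role of what was called $L_f$), obtaining the map $\inbr : B_f \to L_f$ and the set $K_f = \{\inbr(b) \mid b \in B_f\}$. The final output is $\kernel = \bigcup_f K_f$.

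Correctness and size fall out quickly. Since $|K_f| \le |B_f| = O(1/\gamma^{d-1}) = O(1/\eps^{(d-1)/2})$ and there are $2d$ facets, $|\kernel| = O(1/\eps^{(d-1)/2})$. Property (P3) gives $K_f \supseteq \nbr(L_f, B_f)$, so $\kernel$ contains Chan's $(\eps/2)$-kernel of $\wkernel$ and is therefore itself an $(\eps/2)$-kernel of $\wkernel$; the Composition Lemma then shows it is an $\eps$-kernel of $P$. For preprocessing, I would build $\wkernel$ in $O(n + 1/\eps^{d-1})$ time and then, for each facet, compute the initial $\inbr$ using Lemma~\ref{lemma:discrete} applied to the scaled grids in $O(1/\eps^{d-3/2})$ time, for a total of $O(n + 1/\eps^{d-1})$.

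For stability and update time, the key observation is that inserting or deleting a point in $P$ triggers at most $2d = O(1)$ insertions and deletions in $\wkernel$, hence at most $O(1)$ updates in each $L_f$. I would then argue that each such update to some $L_f$ alters $K_f$ by at most two points: an insertion of $p$ executes the described ``chain-of-swaps'' scan over $B_f$ in order, which by induction on $i$ preserves (P1)--(P2) and inserts $p$ while evicting at most one old $\inbr(b_i)$; a deletion of $p \notin K_f$ only touches the auxiliary structure $X$, while a deletion of $p = \inbr(b_i) \in K_f$ replaces that single point with a freshly computed nearest neighbor. The cost of each update is dominated by recomputing $\nbr(X_{J}, B_f)$ once or twice via Lemma~\ref{lemma:discrete}, which takes $O(|X_J| + 1/\eps^{(d-1)/2}) = O(1/\eps^{(d-1)/2})$ time, plus the linear scan over $B_f$ of the same size.

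The main obstacle is showing that the deletion procedure preserves (P1)--(P3) despite the fact that removing $\inbr(b_i)$ from $L_f$ could, in principle, invalidate every $\inbr(b_j)$ with $j > i$ (since by (P2) each depends on the prefix of prior $\inbr$-values). The fix is precisely the reordering trick: by moving $b_i$ to the tail of the ordering, the deleted point $p$ was not in the prefix used to define any currently-indexed $\inbr(b_j)$ with $j \ne i$, so (P2) continues to hold for them, and only a single new nearest neighbor for the relocated $b_i$ must be recomputed. I would verify (P1) by noting that the new $\inbr(b_i)$ is drawn from $X = L_f \setminus K_f$, and (P3) by observing that (P2) together with (P1) gives the required lower-envelope property. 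Finally, composing the $(2d)$-stability of $\wkernel$ with the $O(1)$-stability of the facet construction yields the claimed stable $\eps$-kernel of $P$.
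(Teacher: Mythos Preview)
Your proposal is correct and follows essentially the same approach as the paper: build the weak kernel $\wkernel$ via Lemma~\ref{lem:weak-fixed}, then for each facet run the iterative $\inbr$ construction with properties (P1)--(P3), handling insertions by the chain-of-swaps scan and deletions by the move-$b_i$-to-tail reordering trick, with cost dominated by one call to Lemma~\ref{lemma:discrete} per update. The only cosmetic difference is that the paper phrases the pipeline per facet (working with $L_f$ directly) rather than invoking the Composition Lemma on the full $\wkernel$, but the content is identical.
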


\paragraph{Stabilizing Chan's dynamic algorithm.}
We now briefly describe how Chan's \cite{Cha08} dynamic $\eps$-kernel algorithm can be adapted so that it maintains a stable $\eps$-kernel of size $O((1/\eps^{d-1}) \log n)$. 
He bypasses the need of fixed anchors by partitioning $P$ into $h=O(\log n)$ layers $\langle P_1, \ldots, P_h\rangle$, where $P_1$ is the inner-most layer and $P_h$ is the outer-most layer,
$|P_i| \geq \gamma \sum_{j=i+1}^h |P_j|$ for a constant $\gamma >1$, and $|P_h| = 1/\eps^{d-1}$. 
$P_1$ is constructed first, and then the rest of the layers are constructed recursively with the remaining points.
For each set $P_i$ (for $i < h$) there exists a set of points $A_i$ which serve as anchor points for $P_i$ in the sense that 
they define a bounding box $I_{A_i}$ (i.e., $P_i \subset I_{A_i}$).  Furthermore, for all $u \in \b{S}^{d-1}$ we have $\wid(A_i,u) \leq \wid(P_{i+1},u)$, and this remains true for $\alpha |P_i|$ insertions or deletions to $P$ for a constant $0 \leq \alpha \leq 1$.
After $\alpha|P_i|$ updates in $P$, the layers $P_i, P_{i+1}, \ldots, P_h$ are reconstructed. 
Also at this point layers $P_j, P_{j+1}, \ldots, P_{i-1}$ will need to be reconstructed if layer $P_j$ is scheduled to be reconstructed in fewer than $\alpha |P_i|$ updates.  
We set $K_h = P_h$, and for $i < h$, an $\eps$-kernel $K_i$ of $P_i$ with respect to $A_i$ is maintained using Lemma~\ref{lem:weak-fixed}.
The set $\kernel = \bigcup_{i=1}^h K_i$ is an $\eps$-kernel of $P$; 
$|\kernel| = O((1/\eps^{d-1})\log n)$.

When a new point $p$ is inserted into $P$, it is added to the outermost layer $P_i$, i.e., $i$ is the largest such value, such that 
$p \in I_{A_i}$. If a point is inserted into or deleted from $P_i$,
we update $K_i$ using Lemma \ref{lem:weak-fixed}. The update time follows
from the following lemma. 

\begin{lemma}\label{lemma:chan}
For any $0 < \eps <1$, an $\eps$-kernel $\kernel$ of $P$ of size $O((1/ \eps^{d-1}) \log n)$ can be maintained in $O(\log  n)$ time, and the number of changes in $\kernel$ at each update is $O(1)$.
\end{lemma}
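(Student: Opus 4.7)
The plan is to combine the layered decomposition outlined above with Lemma~\ref{lem:weak-fixed}. First, I would verify correctness of $\kernel=\bigcup_i K_i$ as an $\eps$-kernel of $P$: fix any direction $u\in\b{S}^{d-1}$ and let $P[u]\in P_i$. Since $K_i$ is an $\eps$-kernel of $P_i$ \emph{with respect to} $A_i$, we have $\IP{P_i[u]-K_i[u]}{u}\le \eps\,\wid(A_i,u)$, and the maintained invariant $\wid(A_i,u)\le \wid(P_{i+1},u)\le \wid(P,u)$ upgrades this to $\IP{P[u]-\kernel[u]}{u}\le \eps\,\wid(P,u)$. The size bound $|\kernel|=O((1/\eps^{d-1})\log n)$ then follows from $h=O(\log n)$ and $|K_i|=O(1/\eps^{d-1})$ by Lemma~\ref{lem:weak-fixed}.

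For a single update, on insertion of a point $p$ I would scan the $O(\log n)$ layers to identify the outermost $P_i$ with $p\in I_{A_i}$ as described above; the scan costs $O(\log n)$ time. I then feed the insertion into the structure of Lemma~\ref{lem:weak-fixed} for $K_i$, which costs $O(\log(1/\eps))=O(\log n)$ and alters at most $O(1)$ points of $K_i$. Deletions are symmetric: locate the layer containing $p$ and use Lemma~\ref{lem:weak-fixed} to remove $p$ from $K_i$ if necessary. In either case the online cost is $O(\log n)$ time and $O(1)$ kernel changes.

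The main obstacle is the periodic rebuilding of a layer suffix $P_i,\ldots,P_h$ triggered once $\alpha|P_i|$ updates have accumulated: done naively, this produces $\Theta(|K_i|)$ kernel changes at a single update and destroys stability. I would resolve this by deamortization. As soon as a rebuild of $P_i$ is scheduled, I construct the new anchors $A_i'$ and the new kernel $K_i'$ in the background over the next $\Theta(|P_i|)$ updates, performing $O(1)$ rebuild work per update. The switchover is spread across the same window: one old entry of $K_i$ is retired and one new entry of $K_i'$ is installed per update. Because $|P_i|\ge 1/\eps^{d-1}=\Omega(|K_i|)$, the window is long enough to absorb all swaps at $O(1)$ changes per update, and throughout the transition the old anchor invariant is still valid for $\alpha|P_i|$ updates, so every intermediate $\kernel$ remains a legal $\eps$-kernel. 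Finally, the geometric growth $|P_i|\ge\gamma\sum_{j>i}|P_j|$ ensures that only $O(1)$ layers are in active transition at any moment, keeping the total number of kernel changes per update at $O(1)$.
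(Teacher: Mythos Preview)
Your overall architecture matches the paper's: use Chan's layered decomposition, maintain each $K_i$ via Lemma~\ref{lem:weak-fixed}, and de-amortize the periodic rebuilds. The correctness argument for $\kernel=\bigcup_i K_i$ and the size bound are fine, and the handling of a single update that does not trigger a rebuild is exactly as in the paper.

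There is, however, a real gap in your de-amortization accounting. Your final sentence asserts that the geometric growth $|P_i|\ge\gamma\sum_{j>i}|P_j|$ forces ``only $O(1)$ layers [to be] in active transition at any moment.'' This is false as stated: each layer $i$ is in its transition window for a constant fraction of its period, so at a typical moment $\Theta(h)=\Theta(\log n)$ layers are simultaneously mid-transition. What the geometric growth actually buys you is that the \emph{total swap rate} across all layers is bounded: layer $i$ needs $|K_i|=O(1/\eps^{d-1})=O(|P_h|)$ swaps spread over $\Theta(|P_i|)\ge\Theta(\gamma^{h-i}|P_h|)$ updates, so its per-update swap budget is $O(\gamma^{-(h-i)})$, and $\sum_i \gamma^{-(h-i)}=O(1)$. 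Equivalently (and this is what the paper does), one first shows an amortized bound---over a full round of $\Theta(n)$ updates the total number of kernel changes due to rebuilds is $\sum_i (n/|P_i|)\cdot O(1/\eps^{d-1})=O(n)$---and then applies a standard de-amortization. Your proposal jumps to the de-amortized conclusion without establishing this amortized budget, and the substitute claim you offer does not hold.

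A smaller issue: your switchover protocol retires one old point and installs one new point per update, interleaved. If a point of $K_i$ is retired before $K_i'$ is fully installed, the intermediate set need not be an $\eps$-kernel of $P_i$. The safe (and standard) order, which the paper uses, is to enqueue and perform all insertions of $K_i'$ first, and only then enqueue the deletions of the old $K_i$; this guarantees that at every moment at least one of $K_i$, $K_i'$ is present in full.
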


\begin{proof}
Recall that if the insertion or deletion of a point does not require reconstruction of any layers, the update time is $O(\log n)$ and by Lemma~\ref{lem:weak-fixed}, only $O(1)$ changes occur in $\kernel$. 
We start by bounding the amortized time spent in reconstructing layers and their kernels.  

The kernel $K_i$ of $P_i$ is rebuilt if at least $\alpha|P_i|$ updates have taken place since the last reconstruction, if $K_l$ needs to be rebuilt for some $l < i$, or 
if some $K_j$ is rebuilt for $j > i$ and $K_i$ is scheduled to be rebuilt in fewer than $\alpha |P_j|$ updates to $P$.  These conditions imply that the $i$th layer is rebuilt after every $\alpha |P_h| 2^j$ updates where $j$ is the smallest integer such that $|P_i| \leq |P_h| 2^j$.  
The third condition acts to coordinate the rebuilding of the layers so that the $i$th layer is not rebuilt after fewer than $\alpha|P_i|/2$ updates since its last rebuild.  

$K_h$ is rebuilt after $\alpha/\eps^{d-1}$ updates.  And $|P_i| \leq \gamma^{h-i} |P_h|$.  Since the entire system is rebuilt after $\Theta(|P_1|) = \Theta(n)$ updates, we call this interval a round.  We can bound the updates to $\kernel$ in a round by charging $O(1/\eps^{d-1})$ each time a $K_i$ is rebuilt, which occurs at most $\Theta(|P_1|) 2/\alpha |P_i| = \Theta(n/|P_i|)$ times in a round.
$$
\sum_{i=1}^h \frac{O(n)}{\Omega(|P_i|)} \cdot O(1/\eps^{d-1})
\leq
\sum_{i=1}^h \frac{O(n) \cdot O(1/\eps^{d-1})}{\Omega(\gamma^{h-i} |P_h|)}
=
\sum_{i=1}^h \frac{O(n)}{\Omega(\gamma^i)}
= 
O(n).
$$
Thus there are $O(n)$ updates to the $\eps$-kernel $\kernel$ for every $\Theta(n)$ updates to $P$.  Thus, in an amortized sense, for each update to $P$ there are $O(1)$ updates to $\kernel$.  

This process can be 
de-amortized by adapting the standard techniques for de-amortizing the 
update time of a dynamic data structure~\cite{Ove83}. 
If a kernel $K_i$ is valid for $k$ insertions or deletions to $P$, then we start construction on the next kernel $K_i$ after $k/2$ insertions or deletions have taken place since the last time $K_i$ was rebuilt.  All insertions can be put in a queue and added to $K$ by the time $k/2$ steps have transpired.  All deletions from old $K_i$ to new $K_i$ are then queued and removed from $K$ before another $k/2$ insertions or deletions.  This can be done by performing $O(1)$ queued insertions or deletions from $K$ each insertion or deletion from $P$.  
\end{proof}

\paragraph{Updating anchors.}

We now describe the algorithm for maintaining a stable $\eps$-kernel when
anchors of $P$ are no longer fixed and need to be updated dynamically. 
Roughly speaking,
we divide $P$ into \emph{inner} and \emph{outer} subsets of points. 
The outer subset acts as a \emph{shield} so that a stable kernel of the inner
subset with respect to a fixed anchor can be maintained using 
Lemma~\ref{lem:weak-fixed} or~\ref{lem:strong-fixed}. When the 
outer subset can no longer act as a shield, we reconstruct the inner
and outer sets and start the algorithm again. We refer to the duration 
between two consecutive reconstruction steps as an \emph{epoch}. 
The algorithm maintains a stable kernel within each epoch, and 
the amortized number of changes in the kernel because of 
reconstruction at the beginning of a new epoch will be $O(1)$. As above, we 
use the same de-amortization technique to make the $\eps$-kernel stable
across epochs. We now describe the algorithm in detail.

In the beginning of each epoch, we perform the following preprocessing.
Set $\alpha=1/10$ and compute a $\alpha$-kernel $\wkernel$ of $P$ 
of size $O(\log n)$ using Chan's dynamic algorithm; we do not need the stable 
version of his algorithm described above.  $\wkernel$ can be updated
in $O(\log n)$ time per insertion/deletion.
We choose a parameter $m$, which is set to $1/\eps^{d-1}$ or 
$1/\eps^{(d-1)/2}$. We create the outer subset of $P$ by peeling off 
$m$ ``layers'' of anchor points $A_1, \ldots, A_m$.
Initially, we set $P_0 = P$. Suppose we have constructed 
$A_0,\ldots,A_{i-1}$. Set $P_{i-1}=P \setminus \bigcup_{j=1}^{i-1} A_j$,
and $\wkernel$ is an $\alpha$-kernel of $P_{i-1}$.
Next, we construct the anchor set $A_i$ of $\wkernel$ 
as described earlier in this section. We set $P_i = P_{i-1}\setminus A_i$
and update $\wkernel$ so that it is an $\alpha$-kernel of $P_i$. Let 
$\A = \bigcup_i A_i$, $A = A_m$, and $P_I = P \setminus \A$. Let
$\hcube= (1+\alpha)I_A$. By construction $P_I \subset \hcube$.  $\A$ forms the outer subset and acts as a shield for $P_I$, which is the inner subset. 
Set $\delta = \eps/(2(1+\alpha)(\beta_d)^2)$, where $\beta_d$ is the constant in Lemma~\ref{lemma:fat}.

If $m=1/\eps^{d-1}$ (resp.\ $1/\eps^{(d-1)/2}$), we maintain a stable
$\delta$-kernel $\kernel_I$ of $P_I$ with respect to $A$ of size $O(m)$
using Lemma~\ref{lem:weak-fixed} (resp.\ Lemma~\ref{lem:strong-fixed}).
Set 
$\kernel = \kernel_I \cup \A$; $|\kernel|=O(m)$. We prove below that $\kernel$ is an 
$\eps$-kernel of $P$.
Let $p$ be a point that is inserted into or deleted from $P$.
If $p \in \hcube$, then we update  $\kernel_I$ using 
Lemma \ref{lem:weak-fixed} or \ref{lem:strong-fixed}.  On the other hand,
if $p$ lies outside $\hcube$, we insert it into or delete it from $\A$.
Once $\A$ has been updated $m$ times, we end the current epoch and discard the 
current $\kernel$. We begin a new epoch and reconstruct
$\A$, $P_I$, and $\kernel_I$ as described above. 

The preprocessing step at the  beginning of a new epoch causes $O(m)$ changes in $\kernel$
and there are at least $m$ updates in each epoch, therefore 
the algorithm maintains a 
stable kernel in the amortized sense. As above, using a de-amortization 
technique, we can ensure that $\kernel$ is stable. The correctness of the 
algorithm follows from the following lemma.


\begin{lemma}
\label{lem:anchor}
$\kernel$ is always an $\eps$-kernel of $P$.
\end{lemma}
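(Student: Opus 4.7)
The plan is to fix an arbitrary direction $u \in \b{S}^{d-1}$ and split on where the extreme point $P[u]$ of $P$ in direction $u$ lies. In the first case, $P[u] \in \A$: since $\A \subseteq \kernel$ by construction, the point $P[u]$ itself is in $\kernel$, so $\IP{P[u] - \kernel[u]}{u} = 0$ and there is nothing to prove. In the remaining case, $P[u] \in P_I$, and using $\kernel_I \subseteq \kernel$ reduces the problem to bounding $\IP{P_I[u] - \kernel_I[u]}{u}$ by $\eps\,\wid(P,u)$.

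For this second case I would invoke the guarantee of Lemma~\ref{lem:weak-fixed} (or Lemma~\ref{lem:strong-fixed}): $\kernel_I$ is maintained as a $\delta$-kernel of $P_I$ with respect to the fixed anchor $A$, giving $\IP{P_I[u] - \kernel_I[u]}{u} \le \delta\,\wid(A,u)$ up to the correction caused by $P_I$ lying in $\hcube = (1+\alpha)I_A$ rather than $I_A$. Since $A \subseteq P$ throughout the epoch (any $a \in A$ that is deleted from $P$ is simultaneously removed from $\A$, and we never redefine $A$), we have $\wid(A,u) \le \wid(P,u)$. With the choice $\delta = \eps/(2(1+\alpha)\beta_d^2)$ this already yields the bound $\eps\,\wid(P,u)$, and the leftover factors of $(1+\alpha)$ and $\beta_d$ are absorbed by the two extra factors in the denominator of $\delta$: the $(1+\alpha)$ accounts for the dilation from $I_A$ to $\hcube$, and one $\beta_d$ comes from Lemma~\ref{lemma:fat} when converting widths between $\hcube$ (or $I_A$) and the anchor set $A$.

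The step I expect to require most care is the verification that the fixed-anchor subroutine is actually applicable here, even though the usual hypothesis ``$A$ is an anchor set of the input'' is mildly violated: the inner set $P_I$ lies in $\hcube$, which is strictly larger than $I_A$ by the factor $(1+\alpha)$. The resolution is that the grid construction underlying Lemma~\ref{lem:weak-fixed}/\ref{lem:strong-fixed} transplants to $\hcube$ without loss of correctness, at the cost of inflating the resulting per-direction error by at most $(1+\alpha)\beta_d$ when re-expressed in terms of $\wid(A,u)$. That extra factor is precisely what the constants in $\delta$ were chosen to cancel, so once it is pinned down, combining the case analysis with the inequality $\wid(A,u) \le \wid(P,u)$ closes the argument.
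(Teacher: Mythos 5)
Your case split and overall plan are sensible, and you correctly identify $\kernel = \kernel_I \cup \A$ as the structure to exploit. But the linchpin claim ``$A \subseteq P$ throughout the epoch'' is not justified, and in fact it is precisely what can fail. The set $A = A_m$ is fixed at the start of the epoch as a \emph{reference} anchor; nothing in the algorithm prevents the adversary from deleting points of $A_m$ from $P$ mid-epoch. Your parenthetical is self-undermining: if some $a \in A$ is deleted from $P$ (and removed from $\A$) while ``we never redefine $A$,'' then $A \not\subseteq P$ and the inequality $\wid(A,u) \le \wid(P,u)$ --- which your whole second case rests on --- is simply false. This is not a corner case to wave away; it is the central difficulty the construction is designed to survive, and it is exactly why the algorithm peels off $m$ nested anchor layers $A_1,\ldots,A_m$ rather than just one.

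The paper's proof handles this by a pigeonhole argument: since an epoch ends after only $m$ updates to $\A$ and there are $m$ layers, at least one layer $A_i$ survives intact, i.e.\ $A_i \subseteq \A \subseteq \kernel \subseteq P$. The chain of inequalities is then routed through this intact $A_i$: one shows $\wid(A,u) \le \beta_d^2\,\wid(A_i,u)$ (via $\wid(A,u) \lesssim \wid(P_I,u) \le \wid(P_i,u) \le \wid(I_{A_i},u) \le \beta_d\,\wid(A_i,u)$, two applications of Lemma~\ref{lemma:fat}), and since $A_i$ is still a subset of $P$ and is contained in $\kernel$, the bound $\eps\,\wid(P_I \cup A_i,u)$ follows and the extreme points of $A_i$ are captured directly. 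Notice that the $\beta_d^2$ in the definition of $\delta$ --- which you treated as slack absorbing the $\hcube$-vs-$I_A$ dilation --- is actually the price of this $A \to A_i$ detour; your accounting uses that slack for the wrong purpose, which is a second sign the intact-layer step is missing. To repair your proof you would replace ``$A \subseteq P$'' by ``there exists an intact $A_i \subseteq P$,'' establish $\wid(A,u) \le \beta_d^2\,\wid(A_i,u) \le \beta_d^2\,\wid(P,u)$, and then your choice of $\delta$ closes the argument as you intended.
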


\begin{proof}
It suffices to prove the lemma for a single epoch. Since we begin a new 
epoch after $m$ updates in $\A$, there is at least one $i$ such that
$A_i \subseteq \A \subseteq \kernel$. 
Thus we can show that $A_i \cup \kernel_I$ forms an $\eps$-kernel of 
$A_i \cup P_I$.  For any direction $u \in \b{S}^{d-1}$
\begin{eqnarray*}
\wid(\kernel_I,u) \leq \wid(P_I,u) &\leq& \wid(\kernel_I, u) + 2(1+\alpha)\delta \cdot \wid(A,u) 
\\ & \leq &
 \wid(\kernel_I, u) +  2(1+\alpha) \delta \beta_d \cdot \wid(P_I,u)  \textrm{\hspace{.3in} [via Lemma \ref{lemma:fat}]}
\\ & \leq &
 \wid(\kernel_I, u) +  2(1+\alpha) \delta \beta_d \cdot \wid(P_i,u)
\\ & \leq &
 \wid(\kernel_I, u) + 2(1+\alpha) \delta \beta_d \cdot \wid(I_i, u)
\\ & \leq &
 \wid(\kernel_I, u) + 2(1+\alpha)\delta (\beta_d)^2 \cdot \wid(A_i, u)   \textrm{\hspace{.3in} [via Lemma \ref{lemma:fat}]}
\\ & \leq &
 \wid(\kernel_I, u) + 2(1+\alpha)\delta (\beta_d)^2 \cdot \wid(P_I \cup A_i,u).
\end{eqnarray*}
Thus, for any direction $u \in \b{S}^{d-1}$ the extreme point of $P_I \cup A_i$ is either in $P_I$ or $A_i$.  In the first case, $\kernel_I$ approximates the width within a factor of $2(1+\alpha)\delta (\beta_d)^2 \cdot \wid(P_I \cup A_i, u) = \eps \cdot \wid(P_I \cup A_i, u)$.  In the second case, the extreme point is in $\kernel$ because all of $A_i$ is in $\kernel$.  
Thus the set $P_I \cup A_i$ has an $\eps$-kernel in $\kernel$, and the rest of the points are also in $\kernel$, so $\kernel$ is an $\eps$-kernel of the full set $P$.  

The size of $\kernel$ starts at $O(m)$ because both $\kernel_I$ and $\A$ are of size $O(m)$.  At most $m$ points are inserted outside of $I_{A}$ and hence into $\A$, thus the size of $\kernel = \kernel_I \cup \A$ is still $O(m)$ after $m$ steps.  Then the epoch ends.
\end{proof}

Using Lemmas~\ref{lem:weak-fixed} and~\ref{lem:strong-fixed}, we can
bound the update time and conclude the following.
\begin{lemma}
For a set $P$ of $n$ points in $\reals^d$ and a parameter 
$0 < \eps < 1$, there is a data structure that can maintain 
a stable $\eps$-kernel of $P$ of size:
\begin{itemize} \denselist
\item[(a)] $O(1/\eps^{(d-1)/2})$ under insertions and deletions in time 
$O(n \eps^{(d-1)/2} + 1/\eps^{(d-1)/2} + \log n)$, or
\item[(b)] $O(1/\eps^{d-1})$ in time $O(n \eps^{d-1} + \log n + \log (1/\eps))$.
\end{itemize}
\label{lem:stable-outter}
\end{lemma}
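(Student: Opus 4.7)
The plan is to simply bound the three sources of cost in the algorithm already constructed above: (i)~the preprocessing performed at the beginning of each epoch, (ii)~the per-update cost inside an epoch, and (iii)~the de-amortization overhead.

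\textbf{Step 1: cost of starting a new epoch.} At the beginning of an epoch we (a)~initialize Chan's (non-stable) dynamic $\alpha$-kernel $\wkernel$ of $P$ in $O(n)$ time, (b)~peel off $m$ anchor layers $A_1,\ldots,A_m$, each of which is computed from the current $\wkernel$ of size $O(\log n)$ in $O(\log n)$ time, and (c)~invoke the preprocessing of Lemma~\ref{lem:weak-fixed} or Lemma~\ref{lem:strong-fixed} on the inner set $P_I$ with fixed anchor $A=A_m$, which costs $O(|P_I|+1/\eps^{d-1})$. The total preprocessing cost per epoch is $O(n + 1/\eps^{d-1} + m\log n)$.

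\textbf{Step 2: per-update cost inside an epoch.} Each insertion or deletion of a point in $P$ triggers exactly one of two actions. If the point lies outside $\hcube$, we insert/delete it from $\A$, pay $O(\log n)$ to maintain $\wkernel$ (Chan's non-stable algorithm), and advance the epoch counter. Otherwise it lies in $P_I$ and we invoke the fixed-anchor update procedure of Lemma~\ref{lem:weak-fixed} or~\ref{lem:strong-fixed}, costing $O(\log(1/\eps))$ in case (b) or $O(1/\eps^{(d-1)/2})$ in case (a). Since the anchor set $A$ does not change during the epoch and $P_I\subset\hcube$, the hypotheses of those lemmas remain valid.

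\textbf{Step 3: amortization and de-amortization.} An epoch ends after exactly $m$ insertions or deletions outside $\hcube$, so each epoch contains at least $m$ updates. Charging the preprocessing cost of Step~1 over these $m$ updates yields an amortized overhead of $O(n/m + 1/(m\eps^{d-1}) + \log n)$ per update. For case~(a) with $m=1/\eps^{(d-1)/2}$ this is $O(n\eps^{(d-1)/2}+1/\eps^{(d-1)/2}+\log n)$, and for case~(b) with $m=1/\eps^{d-1}$ it is $O(n\eps^{d-1}+\log n)$, which in both cases dominates the per-update cost of Step~2 (after adding $O(\log(1/\eps))$ in case~(b)). The preprocessing can be de-amortized by the standard technique already used in the proof of Lemma~\ref{lemma:chan}: start building the next epoch's structure in the background after $m/2$ updates, spreading the work evenly over the remaining updates, and use a queue of pending updates that is drained at an $O(1)$ rate. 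Stability follows from Lemma~\ref{lem:anchor} together with the $O(1)$-stability of the fixed-anchor kernel and the $O(m)$ changes at an epoch boundary being distributed over $\Theta(m)$ updates.

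No step is genuinely hard; the only subtle point is verifying that during an epoch $A=A_m$ really remains a valid anchor set of $P_I$ so that Lemma~\ref{lem:weak-fixed} or~\ref{lem:strong-fixed} applies, but this is guaranteed by construction since all updates in the epoch either fall inside $\hcube\supset P_I$ or are deflected into $\A$, leaving $A_m\subseteq\A$ and $P_I\subset I_A$ intact until the epoch terminates.
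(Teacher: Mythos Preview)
Your proposal is correct and follows essentially the same approach as the paper: amortize the $O(n+m\log n)$ epoch-preprocessing cost (plus the $O(n+1/\eps^{d-1})$ inner-kernel preprocessing) over the $\ge m$ updates in an epoch, add the per-update cost of the fixed-anchor lemma, and de-amortize via the standard background-rebuilding trick. One small imprecision: the invariant you need for Lemmas~\ref{lem:weak-fixed}/\ref{lem:strong-fixed} is just $P_I\subset\hcube$, not that $A_m$ remains intact (points of $A_m$ can in fact be deleted during an epoch---they count as updates to $\A$), but this does not affect your argument or the bounds.
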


\begin{proof}
We build an outer kernel of size $O(m)$ in $O(n +  m \log n)$ time.  
It lasts for $\Omega(m)$ insertions or deletions, so its construction time can be amortized over that many steps, and thus it costs $O(n/m + \log n)$ time per insertion or deletion.

In maintaining the inner kernel the preprocessing time can be amortized over $m$ steps, but the update time cannot.  
In case (a) we maintain the inner kernel of size $m = O(1/\eps^{(d-1)/2})$ with Lemma \ref{lem:strong-fixed}.  The update time is $O(n \eps^{(d-1)/2} + 1/\eps^{(d-1)/2})$.  
In case (b) we maintain the inner kernel of size $m = O(1/\eps^{d-1})$ with Lemma~\ref{lem:weak-fixed}.  The update time is $O(n \eps^{d-1} + \log (1/\eps))$.  

The update time can be made worst case using a standard de-amortization techniques~\cite{Ove83}.  More specifically, we start rebuilding the inner and outer kernels after $m/2$ steps and spread out the cost over the next $m/2$ steps.  We put all of the needed insertions in a queue, inserting a constant number of points to $\kernel$ each update to $P$.  Then after the new kernel is built, we enqueue required deletions from $\kernel$ and perform a constant number each update to $P$ over the next $m/2$ steps.  
\end{proof}

\paragraph{Putting it together.}

For a point set $P \subset \b{R}^d$ of size $n$, we can produce the 
best size and update time tradeoff for stable $\eps$-kernels by 
invoking Lemma \ref{lem:chain} to compose three stable $\eps$-kernel 
algorithms, as illustrated in Figure \ref{fig:compose}.  
We first apply Lemma \ref{lemma:chan} to maintain a stable 
$(\eps/3)$-kernel $\kernel_1$ of $P$ of size $O(\min\{n, (1/\eps^{d-1}) \log n\})$ 
with update time $O(\log n)$.  
We then apply Lemma~\ref{lem:stable-outter} to maintain a stable 
$(\eps/3)$-kernel $\kernel_2$ of $\kernel_1$ of size $O(1/\eps^{d-1})$ 
with update time $O(|\kernel_1| \eps^{d-1} + \log |\kernel_1| + \log (1/\eps)) 
= O(\log n + \log (1/\eps))$.  
Finally we apply Lemma~\ref{lem:stable-outter} again to maintain a 
stable $(\eps/3)$-kernel $\kernel$ of $\kernel_2$ of size 
$O(1/\eps^{(d-1)/2})$ with update time 
$O(|\kernel_2| \eps^{(d-1)/2} + 1/\eps^{(d-1)/2} + \log |\kernel_2|) 
= O(1/\eps^{(d-1)/2})$.  
$\kernel$ is a stable $\eps$-kernel of $P$ of size
$O(1/\eps^{(d-1)/2})$ with update time $O(\log n +
1/\eps^{(d-1)/2})$.  This completes the proof of
Theorem~\ref{theo:dynamic}.


\begin{figure}[htb]
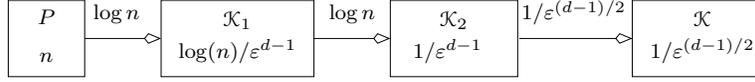

\begin{center}
\input figs/pipeline
\caption[Composing stable $\eps$-kernel algorithms]{ Composing
stable $\eps$-kernel algorithms.} 
\label{fig:compose}
\end{center}
\end{figure}

\vspace{-.2in}
\section{Approximation Stability}
\label{sec:approx}

In this section we prove Theorem~\ref{theo:approx}. 
We first give a short proof for the lower-bound and then a more involved proof of the upper bound.  For the upper bound, we first develop basic ideas and prove the theorem in $\b{R}^2$ and $\b{R}^3$ before generalizing to $\b{R}^d$.

\subsection{Lower Bound}
Take a cyclic polytope with $n$ vertices and $\Omega(n^{\lfloor d/2 \rfloor})$ facets and convert it into a fat polytope $\polyt$ using standard procedures~\cite{AHV04}.  For a parameter $\eps>0$, we add, for each facet $f$ of $\polyt$, a point $p_f$ that is $\eps$ far away from the facet. Let $P$ be the set of vertices of $\polyt$ together with the collection of added points. We choose $\eps$ sufficiently small so that points in $P$ are in convex position and all non-facet faces of $\polyt$ remain as faces of $\conv{P}$. Then the size of an optimal $\eps$-kernel of $P$ is at most $n$ (by taking the vertices of $\polyt$ as an $\eps$-kernel), but the size of an optimal $\eps/2$-kernel is at least the number of facets of $\polyt$, because every point of the form $p_f$ has to be present in the kernel. 
The first half of the lower bound is realized with $O(1/\eps^{(d-1)/2})$ evenly-spaced points on a sphere, and hence the full lower bound is proved.

\subsection{Upper Bound}
By \cite{AHV04}, it suffices to consider the case in which $P$ is fat and the diameter of $P$ is normalized to $1$.  
Let $\kernel$ be an $\eps$-kernel of $P$ of the smallest size.  
Let $\polyt=\conv{\kernel}$, and  $\polyt_\eps=\polyt \oplus \eps \ball^d$.  We have $\polyt \subseteq \conv{P} \subseteq \polyt_\eps$ by the definition of $\eps$-kernels. It suffices to show that there is a set $\kernel' \subseteq P$ such that for $\polyt'=\conv{\kernel'}$, $\polyt' \subseteq \conv{P} \subseteq \polyt'_{\eps/2}$, and $|\kernel'| =O(|\kernel| ^{\lfloor d/2 \rfloor}\log^{d-2} (1/\eps))$ \cite{AHV04}.

For convenience, we assume that $\kernel'$ is not necessarily a subset of points in $P$; instead, we only require $\kernel'$ to be a subset of points in $\conv{P}$. By Caratheodory's theorem, for each point $x \in \kernel$, we can choose a set $P_x \subseteq P$ of at most $d+1$ points such that $x \in \conv{P_x}$.  We set $\bigcup_{x \in \kernel'} P_x$ as the desired $(\eps/2)$-kernel of $P$; $|\bigcup_{x \in \kernel'} P_x| \leq (d+1)|\kernel^\prime| = O(\opt{P}{\eps}^{\lfloor d/2 \rfloor} \log^{d-2} (1/\eps))$.

Initially, we add a point into $\kernel'$ for each point in $\kernel$.  If $p \in \kernel$ lies on $\partial \conv{P}$, we add $p$ to $\kernel'$.  Otherwise we project $p$ onto $\partial \conv{P}$ in a direction in which $p$ is maximal in $\kernel$ and add the projected point to $\kernel'$.  Abusing the notation slightly, we use $\polyt$ to denote the convex hull of these initial points.  
For simplicity, we assume $\polyt$ to be a simplicial polytope. 

\paragraph{Decomposition of $\polyt_\eps \setminus \intr \polyt$.}
There are $d$ types of simplices on $\partial \polyt$.  In $\b{R}^2$ these are points and edges.  In $\b{R}^3$ these are points, edges, and triangles.  We can decompose $\polyt_\eps \setminus \intr \polyt$ into a set of regions, each region $\sigma(f)$ corresponding to a simplex $f$ in $\polyt$.  
For each simplex $f$ in $\polyt$ let $\dual{f} \subseteq \b{S}^{d-1}$ denote the dual of $f$ in the Gaussian diagram of $\polyt$.  
Recall that if $f$ has dimension $k$ ($0 \leq k \leq d-1$), then $\dual{f}$ has dimension $d-1-k$.  
The region $\polyt_\eps \setminus \intr \polyt$ is partitioned into a collection of $|\polyt|$ regions (where $|\polyt|$ is the number of faces of all dimensions in $\polyt$).  Each simplex $f$ in $\polyt$ corresponds to a region defined
$$
\sigma(f) = \{ f + z u \mid 0 \leq z \leq \eps,\, u \in \dual{f}\}.
$$
For a subsimplex $\tau \in f$, we can similarly define a region 
$
\sigma(\tau) = \{\tau + zu \mid 0 \leq z \leq \eps,\, u\in \dual{f}\}.
$
In $\b{R}^2$, there are two types of regions: point regions and edge regions.  
In $\b{R}^3$, there are three types of regions: point regions (see Figure \ref{fig:decomposition}(a)), edge regions (see Figure \ref{fig:decomposition}(b)), and triangle regions (see Figure \ref{fig:decomposition}(c)).  

For convenience, for any point $q =\fq + z\cdot u \in \sigma(f)$, where $\fq\in f, 0\leq z\leq \eps$, and $u\in \dual{f}$, we write $q = \point{\fq}{z}{u}$ (which intuitively reads, the point whose projection onto $f$ is $\fq$ and which is at a distance $z$ above $f$ in direction $u$). 
We also write $\rotate{q}{v} = \fq+z\cdot v$ (intuitively, $\rotate{q}{v}$ is obtained by rotating $q$ w.r.t.~$f$ from direction $u$ to direction $v$).
Similarly, we write a simplex $\point{\fDelta}{z}{u}=\fDelta \oplus z\cdot u$, where $\fDelta$ is a simplex inside $f$, $0\leq z\leq \eps$, and $u\in \dual{f}$, and write $\rotate{\Delta}{v} = \fDelta \oplus z\cdot v$.

We will proceed to prove the upper bound as follows.  For each type of region $\sigma(f)$ we place a bounded number of points from $\sigma(f) \cap \conv{P}$ into $\kernel'$ and then prove that all points in $\sigma(f) \cap \conv{P}$ are within a distance $\eps/2$ from some point in $\polyt' = \conv{\kernel'}$.  
We begin by introducing three ways of ``gridding'' $\sigma(f)$ and then use these techniques to directly prove results for several base cases, which illustrate the main conceptual ideas.  These base cases will already be enough to prove the results in $\b{R}^2$ and $\b{R}^3$.  Finally we generalize this to $\b{R}^d$ using an involved recursive construction.  
We set a few global values: $\delta = \eps/12d$, $\theta = 2\arcsin (\delta/2\eps)$, and $\rho = \delta/\eps$.

\begin{description} \denselist
\item[1: Creating layers.]
For a point $q = \point{\fq}{z}{u} \in \sigma(f)$ we classify it depending on the value $z = |q-\fq|$.  If $z \leq \eps/2$, then $q$ is already within $\eps/2$ of $\polyt$.  We then divide the range $[\eps/2, \eps]$ into a constant $H = (\eps/2)/\delta$ number of cases using $\c{H} = \{ h_1 = \eps/2, h_2 = \eps/2+\delta, \ldots, h_H = \eps-\delta \}$.  If $z \in [h_i, h_{i+1})$, then we set $q_{h_i} = \point{\fq}{h_i}{u}$.  We define $\Psi_{f,h_i} \subset \sigma(f) \cap \conv{P}$ to be the set of points that are a distance exactly $h_i$ from $f$.  
\vspace{1mm}

\item[2: Discretize angles.]
We create a constant size $\theta$-net $U_{f,h} = \{u_1, u_2, \ldots\} \subset \dual{f}$ of directions with the following properties.
(1) For each $q = \point{\fq}{h}{u} \in \Psi_{f,h}$ there is a direction $u_i \in U_{f,h}$ such that the angle between $u$ and $u_i$ is at most $\theta$.
(2) For each $u_i \in U_{f,h}$ there is a point $p_i = \point{\fp_i}{h}{u_i} \in \Psi_{f,h}$; let $N_{f,h} = \{p_i \mid i \geq  1\}$.  
$U_{f,h}$ is constructed by first taking a $(\theta/2)$-net $U_f$ of $\dual{f}$, then for each $u_i' \in U_f$ choosing a point $p_i = \point{\fq_i}{h}{u_i} \in \Psi_{f,h}$ where $u_i$ is within an angle $\theta/2$ of $u'_i$ (if one exists), and finally placing $u_i$ in $U_{f,h}$.  
\vspace{1mm}

\item[3: Exponential grid.]
Define a set $\c{D} = \{d_0, d_1 = (1+\rho) d_0, \ldots, d_m = (1+\rho)^m d_0\}$ of distances where $d_m < 1$ and $d_0 = \delta$, so $m = O(\log 1/\eps)$.  
For a face $f \in \polyt$, let any $r \in \sigma(f)$ be called a \emph{support point of $f$}.  
Let $p_1, \ldots, p_k$ be the vertices of the $k$-simplex $f$.  For each $p_j$, and each $d_i \in \c{D}$ (where $d_i < ||p_j - \fr||$), let $p_{j,i}$ be the point at distance $d_i$ from $p_j$ on the segment $p_j \fr$.  
For each boundary facet $F$ of $f$, define a sequence of at most $m$ simplices $F_0, F_1, \ldots \in \conv{F \cup \fr}$, each a homothet of $F$, so the vertices of $F_i$ lie on segments $p_j \fr$ where $p_j \in \partial F$ (see Figure \ref{fig:correct-triangle}(a)).  
The translation of each $F_i$ is defined so it intersects a point $p_{j,i}$ (where $p_j \in \partial F$) and is as close to $F$ as possible.  This set of $(k-1)$-simplices for each $F$ defines the exponential grid $G_{r,f}$.
The full grid structure is revealed as this is applied recursively on each $F_i$.

The exponential grid $G_{r,\Delta}$ on a simplex $\Delta$ has two important properties for a point $q \in \Delta$:  
\begin{itemize} \denselist
\item[(G1)] If $q \in \conv{F \cap \fr}$ lies between boundary facet $F$ and $F_0$, let $q_0$ be the intersection of the line segment $q \fr$ with $F_0$; then $||q - q_0|| \leq d_0 = \delta$.
\item[(G2)] If $q \in \conv{F \cap \fr}$ lies between $F_{i-1}$ and $F_i$ and the segment $q \fr$ intersects $F_i$ at $q_i$, let $q_F$ be the intersection of $F$ with the ray $\overrightarrow{r q}$; then $||q_i - q|| / ||q_i - q_F|| \leq \rho = \delta/\eps$.  
\end{itemize}
\end{description}

We now describe how to handle certain simple types of regions: where $f$ is a point or an edge.  These will be handled the same regardless of the dimension of the problem, and they (the edge case in particular) will be used as important base cases for higher dimensional problems.  

\paragraph{Point regions.}
Consider a point region $\sigma(p)$.  
For each $h \in \c{H}$ create $\theta$-net $U_{p,h}$ for $\Psi_{p,h}$, so $N_{p,h}$ are the corresponding points where each $p_i = \point{p}{u_i}{h} \in N_{p,h}$ has $u_i \in U_{p,h}$.   Put each $N_{p,h}$ in $\kernel'$.  

\parpic[r]{\PPic{1.5cm}{figs/angle}} 
For any point $q = \point{\fq}{z}{u'} \in \sigma(p) \cap \conv{P}$, let $q' = \point{\fq}{h}{u}$ where $h \in \c{H}$ is the largest value such that $h \leq z$ and $u \in U_{p,h}$ is the closest direction to $u'$; set $q_h = \point{\fq}{h}{u'} = \rotate{q'}{u'}$.  
First $||q - q_h|| \leq \delta$ because $z-h \leq \delta$.  
Second $||q_h - q'|| \leq \delta$ because the angle between $u'$ and $u$ is at most $\theta$, and they are rotated about the point $p$.  
Thus $||q - q'|| \leq ||q - q_h|| + ||q_h - q'|| \leq 2\delta \leq \eps/2$.  

\begin{lemma}
\label{lem:point-reg}
For a point region $\sigma(p)$, there exists a constant number of points $K_p \subset \sigma(p) \cap \conv{P}$ such that all points $q \in \sigma(p) \cap \conv{P}$ are within a distance $\eps/2$ of $\conv{K_p}$.  
\end{lemma}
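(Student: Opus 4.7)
The plan is to realize $K_p$ as the union of the angle-net representatives across all radial layers together with $p$ itself, and then to verify directly both the constant size bound and the $\eps/2$ covering guarantee by the two-step triangle-inequality argument already sketched in the Point regions paragraph.

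First, I would set $K_p = \{p\} \cup \bigcup_{h \in \c{H}} N_{p,h}$. Since $H = |\c{H}| = (\eps/2)/\delta = 6d$ and the angular resolution $\theta = 2\arcsin(\delta/2\eps) = 2\arcsin(1/24d)$ both depend only on $d$, each $|N_{p,h}| \leq |U_{p,h}| = O(1/\theta^{d-1}) = O(1)$, and hence $|K_p| = O(1)$. By construction $N_{p,h} \subseteq \Psi_{p,h} \subseteq \sigma(p) \cap \conv{P}$, and $p$ is a vertex of $\polyt$ (after the initial projection step) and hence lies in $\partial\conv{P} \cap \sigma(p)$, so $K_p \subseteq \sigma(p)\cap\conv{P}$ as required.

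Next, I would fix an arbitrary $q = \point{p}{z}{u'} \in \sigma(p) \cap \conv{P}$. If $z \leq \eps/2$, then $\|q - p\| = z \leq \eps/2$ and $p \in K_p$, so $q$ is within $\eps/2$ of $\conv{K_p}$. Otherwise, let $h \in \c{H}$ be the largest layer with $h \leq z$, and by property (1) of the net pick $u \in U_{p,h}$ with angular deviation at most $\theta$ from $u'$. Setting $q' = \point{p}{h}{u} \in N_{p,h}$ and the intermediate point $q_h = \point{p}{h}{u'}$, I would bound $\|q - q_h\| = z - h \leq \delta$ (pure radial shift along $u'$) and $\|q_h - q'\| = 2h \sin(\theta/2) \leq 2\eps \cdot (\delta/2\eps) = \delta$ (chord of the sphere of radius $h \leq \eps$ centered at $p$ subtending angle at most $\theta$). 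The triangle inequality gives $\|q - q'\| \leq 2\delta < \eps/2$, and $q' \in K_p \subseteq \conv{K_p}$.

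The main (small) subtlety I expect is confirming that property (1) of the $\theta$-net actually holds despite the ``if one exists'' clause in the construction of $U_{p,h}$: if some $u'_i$ in the underlying $(\theta/2)$-net $U_f$ is skipped because no point of $\Psi_{p,h}$ lies within angle $\theta/2$, then any $q \in \Psi_{p,h}$ whose direction is nearest to $u'_i$ would itself be such a witness, a contradiction — so skipped directions are never ones we need. Beyond this, the argument rests entirely on the clean geometric decoupling of radial motion from rotation about a single apex $p$, which is what makes point regions the easiest case; this decoupling fails for higher-dimensional simplex regions, which is why the exponential grid $G_{r,f}$ is introduced later.
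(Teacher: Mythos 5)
Your proof is correct and follows the paper's approach exactly (the paper's argument is the in-text discussion preceding the lemma): layering by height $\c{H}$, a $\theta$-net of directions $U_{p,h}$ per layer with representative points $N_{p,h}$, and a two-term triangle inequality bounding the radial shift and the chord of rotation each by $\delta$. The one small thing you add that the paper leaves implicit — explicitly placing $p$ itself in $K_p$ to cover the $z\leq\eps/2$ case against $\conv{K_p}$ rather than against the global $\polyt$ — is the right reading of the lemma as stated, and your resolution of the ``if one exists'' clause is exactly how the $\theta$-net property (1) is meant to be justified.
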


\paragraph{Edge regions.}
Consider an edge region $\sigma(e)$ for an edge $e$ of $\polyt$.  
Orient $e$ along the $x$-axis.  
For each $h \in \c{H}$ and $u \in U_{e,h}$, let $\Psi_{e,h,u}$ be the set of points in $\Psi_{e,h}$ within an angle $\theta$ of $u$.  For each $\Psi_{e,h,u}$, we add to $K_e$ the (two) points of $\Psi_{e,h,u}$
with the largest and smallest $x$-coordinates, denoted by $p^+_{h,u}$ and $p^-_{h,u}$.

\begin{figure}[htb!]
\centering{ \small 
\begin{tabular}{ccc}
\includegraphics[scale=1]{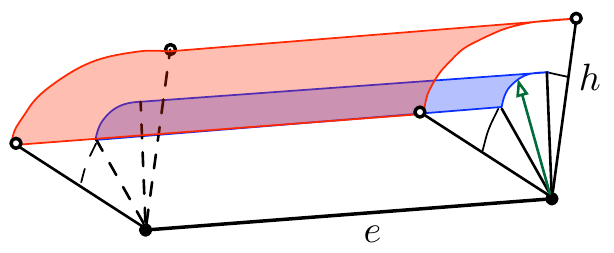} & \hspace{0.5cm} & 
\includegraphics[scale=1]{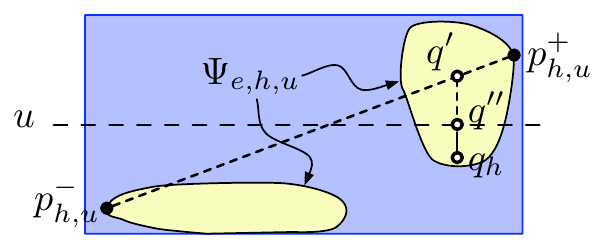}\\
(a) $\sigma(e)$ in $\b{R}^3$ && (b) top view of $\sigma(e)$ at height $h$
\end{tabular}
}
\caption{Illustration of 2 points in $\kernel'$ for edge case with specific $h \in \c{H}$ and $u \in U_{e,h,\theta}$. }
\label{fig:edge-case}
\end{figure}

For any point $q = \point{\fq}{z}{v} \in \sigma(e) \cap \conv{P}$, there is a point $q'' = \point{\fq}{h}{u}$ such that $h \in \c{H}$ is the largest value less than $z$ and $u \in U_{e,h}$ is the closest direction to $v$.  Furthermore, $||q - q''|| \leq ||q - q_h|| + ||q_h - q''|| \leq (z-h) + 2 \eps \sin(\theta/2) = \delta + \delta = 2\delta$.  
We can also argue that there is a point $q' = \point{\fq}{z'}{u'} \in p^-_{h,u} p^+_{h,u}$, because if $\fq$ has smaller $x$-coordinate than $\fp^-_{h,u}$ or larger $x$-coordinate than $\fp^+_{h,u}$, then $q'$ cannot be in $\Psi_{e,h,u}$.  
Clearly the angle between $u$ and $u'$ is less than $\theta$.  This also implies that $h-z' < \delta$.  Thus $||q'' - q'|| \leq 2\delta$, implying $||q-q'|| \leq 4 \delta \leq \eps/2$.

\begin{lemma}
\label{lem:edge-reg} 
For an edge region $\sigma(e)$, there exists $O(1)$ points $K_e \subset \sigma(e) \cap \conv{P}$ such that for any point $q = \point{\fq}{v}{z} \in \sigma(e) \cap \conv{P}$ there is a point $p= \point{\fq}{u}{h} \in \conv{K_e}$ such that $z-h \leq 2\delta$, $||v-u||\leq 2\delta$, and, in particular, $||q-p|| \leq 4\delta \leq \eps/2$.  
\end{lemma}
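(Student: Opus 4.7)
The plan is to formalize the three-part analysis already sketched in the paragraph preceding the lemma: (i) bound $|K_e|$, (ii) exhibit a concrete $p \in \conv{K_e}$ approximating $q$, and (iii) verify the quantitative bounds.

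For (i), I would observe that $|\c{H}| = H = (\eps/2)/\delta = 6d = O(1)$, and for each $h \in \c{H}$ the net $U_{e,h}$ sits in the $(d-2)$-dimensional spherical region $\dual{e}$ with $\theta = 2\arcsin(1/(24d)) = \Theta(1)$, so $|U_{e,h}| = O(1)$. Since $K_e$ contains only the two points $p^-_{h,u}, p^+_{h,u}$ per pair $(h,u)$, this yields $|K_e| = O(1)$.

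For (ii), given $q = \point{\fq}{z}{v}$, I would pick $h \in \c{H}$ maximal with $h \leq z$ (so $z - h \leq \delta$) and $u \in U_{e,h}$ closest in angle to $v$ (so $\angle(u,v) \leq \theta$), and then consider the auxiliary point $q_h = \point{\fq}{h}{v}$. Since $\fq \in e \subseteq \partial\polyt \subseteq \conv{P}$ and $q \in \conv{P}$, the point $q_h$ is a convex combination of two points of $\conv{P}$, so $q_h \in \conv{P} \cap \Psi_{e,h}$; and because its direction $v$ lies within angle $\theta$ of $u$, in fact $q_h \in \Psi_{e,h,u}$. Because $\fp^-_{h,u}$ and $\fp^+_{h,u}$ are the extreme $x$-coordinate projections of points in $\Psi_{e,h,u}$, the $x$-coordinate of $\fq$ must lie in $[\fp^-_{h,u}, \fp^+_{h,u}]$, so the segment $p^-_{h,u} p^+_{h,u}$ contains a unique point $q' = \point{\fq}{z'}{u'} \in \conv{K_e}$ whose projection onto $e$ equals $\fq$.

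For (iii), since $q'$ is a convex combination of two points at height $h$ whose directions each lie within angle $\theta$ of $u$, the direction $u'$ also makes angle at most $\theta$ with $u$, and $|h - z'| \leq h(1 - \cos(\theta/2)) < \delta$. Chaining the three triangle-inequality steps already computed in the excerpt---$\|q - q_h\| = z - h \leq \delta$, $\|q_h - q''\| \leq 2\eps\sin(\theta/2) = \delta$ with $q'' = \point{\fq}{h}{u}$, and $\|q'' - q'\| \leq 2\delta$ via the above bounds on $z'$ and $u'$---yields $\|q - q'\| \leq 4\delta \leq \eps/2$. Setting $p := q'$ then satisfies all three quantitative conclusions. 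The main conceptual hurdle will be step (ii): verifying $q_h \in \Psi_{e,h,u}$ so that $\fq$ is pinned into $[\fp^-_{h,u}, \fp^+_{h,u}]$. This rests on the convexity of $\conv{P}$ (to slide $q$ down to height exactly $h$ while remaining in $\conv{P}$) together with the $\theta$-net property ensuring $v$ is within angle $\theta$ of $u$.
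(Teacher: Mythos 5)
Your proof is correct and follows essentially the same approach as the paper's terse argument preceding the lemma: bound $|K_e|$ by counting $(h,u)$ pairs, slide $q$ down to $q_h=\point{\fq}{h}{v}\in\conv{P}\cap\Psi_{e,h,u}$ by convexity, use the extremality of $\fp^\pm_{h,u}$ to pin a point $q'$ on the segment $p^-_{h,u}p^+_{h,u}$ over $\fq$, and chain the three triangle-inequality bounds. One small slip: since the directions of $p^\pm_{h,u}$ may span an angle up to $2\theta$, the height loss of the convex combination is bounded by $h(1-\cos\theta)$, not $h(1-\cos(\theta/2))$, but with $\theta=2\arcsin(\delta/2\eps)$ this is still $<\delta$, so the conclusion is unaffected.
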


\subsubsection{Approximation Stability in $\b{R}^2$}

For $\kernel \subset P \in \b{R}^2$ there are $|\kernel|$ points and edges in $\polyt$.  
Thus combining Lemmas~\ref{lem:point-reg} and~\ref{lem:edge-reg} $|\kernel'|/|\kernel| = O(1)$ and we have proven Theorem~\ref{theo:approx} for $d=2$.  

\begin{theorem}
For any point set $P \in \b{R}^2$ and for any $\eps > 0$ we have 
$
\kappa(P, \eps/2) / \kappa(P,\eps) = O(1).
$
\end{theorem}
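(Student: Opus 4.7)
The plan is to apply the decomposition of $\polyt_\eps \setminus \intr \polyt$ into its point and edge regions and invoke Lemmas~\ref{lem:point-reg} and~\ref{lem:edge-reg} on each region, then observe that in the plane the total number of such regions is linear in $|\kernel|$.

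First, I would reuse the reductions already performed at the start of Section~\ref{sec:approx}: let $\kernel$ be an optimal $\eps$-kernel of $P$, with $|\kernel|=\opt{P}{\eps}$, assume $P$ is fat with unit diameter, and let $\polyt=\conv{\kernel}$ after the projection of each vertex of $\kernel$ onto $\partial \conv{P}$ described above. In $\b{R}^2$, $\polyt$ is a convex polygon with exactly $|\kernel|$ vertices and $|\kernel|$ edges, so the decomposition of $\polyt_\eps\setminus\intr\polyt$ yields $2|\kernel|$ regions.

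Next, for each vertex $p$ of $\polyt$, Lemma~\ref{lem:point-reg} produces a constant-size set $K_p \subseteq \sigma(p) \cap \conv{P}$ such that every point of $\sigma(p) \cap \conv{P}$ lies within distance $\eps/2$ of $\conv{K_p}$; analogously, for each edge $e$ of $\polyt$, Lemma~\ref{lem:edge-reg} produces a constant-size set $K_e \subseteq \sigma(e) \cap \conv{P}$ with the same guarantee. I would then set
\[
\kernel' \;=\; \kernel \;\cup\; \bigcup_{p}\! K_p \;\cup\; \bigcup_{e}\! K_e,
\]
so $|\kernel'| = O(|\kernel|)$ since there are $2|\kernel|$ regions contributing $O(1)$ points each.

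Finally, I would verify that $\polyt'=\conv{\kernel'}$ satisfies $\polyt'\subseteq \conv{P}\subseteq \polyt'_{\eps/2}$: points of $\polyt\subseteq\polyt'$ are trivially within $\eps/2$ of $\polyt'$, and any other point of $\conv{P}$ lies in some region $\sigma(f)$, where the corresponding $K_f \subseteq \kernel'$ supplies an approximator at distance at most $\eps/2$ by the lemmas. Applying Caratheodory's theorem as in the general setup replaces each selected point by at most $d+1=3$ points of $P$ without changing the asymptotics, so $\opt{P}{\eps/2} = O(|\kernel|) = O(\opt{P}{\eps})$. There is no genuine obstacle here: both lemmas already encapsulate the geometric work, and the planar case just exploits that a convex polygon has a linear number of faces in total.
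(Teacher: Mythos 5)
Your proof is correct and follows essentially the same route as the paper: decompose $\polyt_\eps \setminus \intr \polyt$ into point and edge regions, apply Lemmas~\ref{lem:point-reg} and~\ref{lem:edge-reg} to get $O(1)$ points per region, and observe that a convex polygon has $O(|\kernel|)$ faces total. The extra detail you spell out (Caratheodory, the final containment check) is just making explicit the general setup the paper establishes before the $\mathbb{R}^2$ case.
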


\subsubsection{Approximation Stability in $\b{R}^3$}

\paragraph{Construction of $\kernel'$.}
Now consider $\kernel \subset P \in \b{R}^3$ and the point regions, edge regions, and triangle regions in the decomposition of $\polyt_\eps \setminus \intr \polyt$ (see Figure \ref{fig:decomposition}).  By Lemmas \ref{lem:point-reg}  and \ref{lem:edge-reg} we can add $O(|\kernel|)$ points to $\kernel'$ to account for all point and edge regions.  We can now focus on the $O(|\kernel|)$ triangle regions.

\begin{figure}[htb!]
\begin{center}
{\small 
\begin{tabular}{ccccc}
\includegraphics[scale=1]{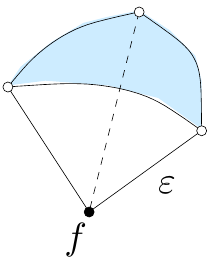} 
& \hspace{1cm} & 
\includegraphics[scale=1]{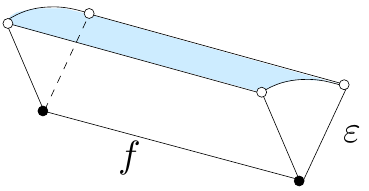} 
& \hspace{1cm} & 
\includegraphics[scale=1]{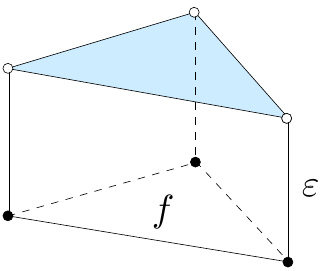}
\\
(a) $f$ is a vertex of $\polyt$ && (b) $f$ is an edge of $\polyt$ && (c) $f$ is a facet of $\polyt$
\end{tabular}
}
\end{center}
\caption{Illustration of regions in the partition of $\polyt_\eps \setminus \intr \polyt$ in three dimensions.}
\label{fig:decomposition}
\end{figure}

Consider a triangle region $\sigma(t)$ for a triangle $t$ in $\polyt$ (see Figure \ref{fig:correct-triangle}(a)), $\dual{t}$ consists of a single direction, the one normal to $t$. 
Let $r$ be the highest point of $\sigma(t) \cap \conv{P}$ in direction $\dual{t}$.  We add $r$ to $\kernel'$ and we create an exponential grid $G_{r,t}$ with $r$ as the support point.
For each edge $e \in G_{r,t}$ and $h \in \c{H}$ we add the intersection of $\point{e}{h}{t^*}$ with the boundary of $\sigma(t) \cap \conv{P}$ to $\kernel'$, as shown in Figure \ref{fig:correct-triangle}(b).  
Thus, in total we add $O(|\kernel| \log (1/\eps))$ points to $\kernel'$.

\paragraph{Proof of correctness.}

\begin{figure}[htb!]
\begin{center}
{\small 
\begin{tabular}{ccccc}
\includegraphics[scale=1]{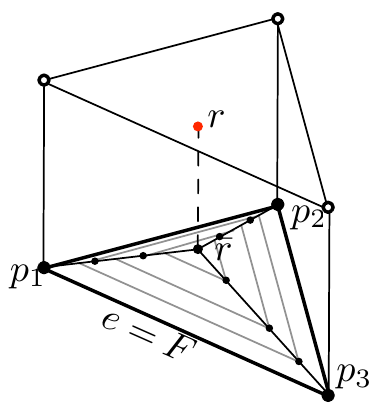} 
& \hspace{0.1cm} & 
\includegraphics[scale=1]{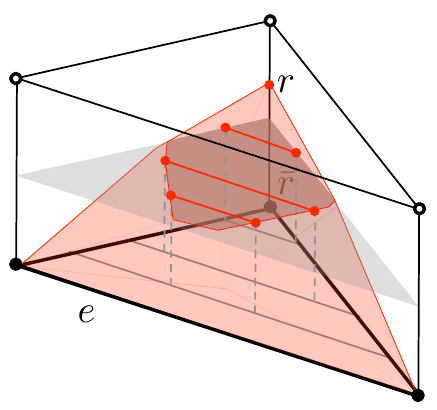} 
& \hspace{0.1cm} & 
\includegraphics[scale=1]{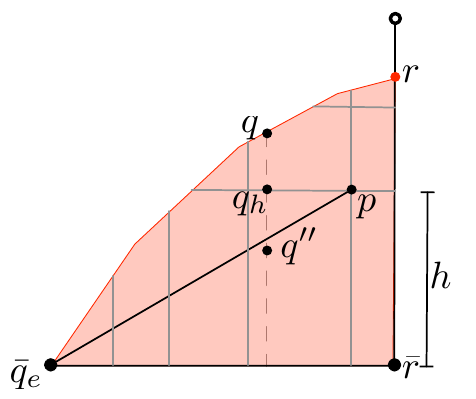}
\\
(a) $\sigma(t)$ with $r$ and $G_{r,t}$ && (b) Subtriangle $t_e$ of $\sigma(t)$  && (c) Slice of (b) through $r$, $q$
\end{tabular}
}
\end{center}
\caption{Illustration to aid correctness of approximation of triangle regions in $\b{R}^3$.}
\label{fig:correct-triangle}
\end{figure}

Consider any point $q = \point{\fq}{z}{t^*} \in \sigma(t) \cap \conv{P}$ and associate it with a boundary edge $e$ of $t$ such that $\fq \in \conv{e \cup \fr}$.  Let $q_h = \point{\fq}{h}{t^*}$ where $h \in \c{H}$ is the largest height such that $h \leq z$.  
If segment $\fq \fr$ does not intersect any edge $e_i$ parallel to $e$ in $G_{r,t}$, let $\fp = \fr$.  
Otherwise, let $e_i$ be the first segment parallel to $e$ in $G_{r,t}$ intersected by the ray $\overrightarrow{\fq \fr}$, and let $\fp$ be the intersection.  
Let $p = \point{\fp}{h}{t^*}$ which must be in $\conv{\kernel'}$ by construction.  
If $e_i = e_0$, then by (G1) we have $||q_h - p|| = ||\fq - \fp|| \leq \delta$, thus $||q - p|| \leq 2\delta \leq \eps/2$ and we are done.  
Otherwise, let $\fq_e$ be the intersection of $e$ with ray $\overrightarrow{\fr \fq}$.  By (G2) $||\fp - \fq||/||\fp - \fq_e|| \leq \rho = \delta/\eps$.  Thus, $q'' = \point{\fq}{h-\eps\rho}{t^*}$ is below the segment $\fq_e p$ (see Figure \ref{fig:correct-triangle}(c)) and thus $q'' \in \conv{\kernel'}$ since triangle $p \fp \fq_e$ is in $\conv{\kernel'}$.  Finally, $||q - q''|| = ||q - q_h|| + ||q_h - q''|| \leq 2\delta \leq \eps/2$.  
This proves Theorem \ref{theo:approx} for $d=3$.  

\begin{theorem}
For any point set $P \in \b{R}^3$ and for any $\eps > 0$ we have 
$
\kappa(P, \eps/2) / \kappa(P,\eps) = O(\log 1/\eps).
$
\end{theorem}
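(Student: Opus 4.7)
The plan is to assemble the theorem from the machinery already developed for $d=2$ together with a new treatment of triangle regions. Starting from an optimal $\eps$-kernel $\kernel$ of $P$, I would set $\polyt=\conv{\kernel}$ (assumed simplicial) and bound the size of a set $\kernel'\subseteq\conv{P}$ witnessing $\opt{P}{\eps/2}\le(d+1)|\kernel'|$ via the Carath\'eodory reduction described earlier. The decomposition of $\polyt_\eps\setminus\intr\polyt$ in $\b{R}^3$ into point regions, edge regions, and triangle regions is the natural scaffold: by Euler's formula the number of simplices of each type is $O(|\kernel|)$. Lemma~\ref{lem:point-reg} handles each point region with $O(1)$ points, and Lemma~\ref{lem:edge-reg} handles each edge region with $O(1)$ points, contributing $O(|\kernel|)$ points to $\kernel'$ in total. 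The remaining task is to bound the contribution of the $O(|\kernel|)$ triangle regions.

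For each triangle $t$ of $\polyt$, the dual $\dual{t}$ is a single outward normal, so $\sigma(t)$ is a prism $t\oplus[0,\eps]\dual{t}$. I would pick the highest point $r$ of $\sigma(t)\cap\conv{P}$ in direction $\dual{t}$, add $r$ to $\kernel'$, and build the exponential grid $G_{r,t}$ on $t$; the grid has $O(\log(1/\eps))$ segments parallel to each of the three edges of $t$. For every edge $e\in G_{r,t}$ and every height $h\in\c{H}$ (a constant number of heights), I would insert into $\kernel'$ the intersection of the segment $\point{e}{h}{\dual{t}}$ with $\partial(\sigma(t)\cap\conv{P})$. This places $O(\log(1/\eps))$ points per triangle region, giving the claimed $O(|\kernel|\log(1/\eps))$ total.

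For correctness I would take an arbitrary $q=\point{\fq}{z}{\dual{t}}\in\sigma(t)\cap\conv{P}$, snap its height down to the largest $h\in\c{H}$ with $h\le z$ (error at most $\delta$), and associate $\fq$ with the boundary edge $e$ of $t$ through which it is ``seen'' from $\fr$. Tracing the ray from $\fr$ through $\fq$, there are two cases. If the ray leaves $t$ without crossing any $G_{r,t}$-segment parallel to $e$, property (G1) yields a point of $\conv{\kernel'}$ at lateral distance $\le\delta$. Otherwise, let $e_i$ be the first such segment hit and $\fp$ the intersection; property (G2) then controls the ratio $\|\fp-\fq\|/\|\fp-\fq_e\|\le\rho=\delta/\eps$, so dropping $q_h$ by $\eps\rho=\delta$ in the $\dual{t}$ direction brings it beneath the triangle spanned by $p=\point{\fp}{h}{\dual{t}}$, $\fp$, and $\fq_e$, all of which lie in $\conv{\kernel'}$. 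Either way $\|q-q'\|\le2\delta\le\eps/2$ for some $q'\in\conv{\kernel'}$.

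Combining the three region counts gives $|\kernel'|=O(|\kernel|\log(1/\eps))$, hence $\opt{P}{\eps/2}=O(\opt{P}{\eps}\log(1/\eps))$. The main obstacle is isolated to the triangle case: the exponential grid must simultaneously be coarse enough to have only $O(\log(1/\eps))$ cells per triangle yet fine enough (via (G1) near the boundary and (G2) away from it) to keep the lateral error within $\eps/2$ after the $\delta$ height snap. Once (G1) and (G2) are verified as claimed, the rest of the argument is routine bookkeeping against $\delta=\eps/12d$ and $\rho=\delta/\eps$.
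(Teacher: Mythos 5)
Your proposal reproduces the paper's own argument essentially verbatim: the same decomposition into point, edge, and triangle regions, the same invocation of Lemmas~\ref{lem:point-reg} and \ref{lem:edge-reg} for the $0$- and $1$-simplices, and for each triangle $t$ the same device of taking the highest point $r$ of $\sigma(t)\cap\conv{P}$, building the exponential grid $G_{r,t}$, adding the two boundary intersections of $\point{e}{h}{\dual{t}}$ for each $e\in G_{r,t}$ and $h\in\c{H}$, and then using (G1) for the near-boundary band and (G2) otherwise to place the snapped point within $2\delta\le\eps/2$ of $\conv{\kernel'}$. The only nit is that your case split (``ray leaves $t$ without crossing any segment'' versus ``otherwise'') mislabels the situations slightly — the paper reserves the no-crossing case for when $\fq\fr$ misses every $e_i$ (so $\fp=\fr$), and applies (G1) when the first $e_i$ hit is $e_0$, then (G2) otherwise — but the distance computations are identical, so this is a phrasing issue rather than a gap.
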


\subsubsection{Approximation Stability in $\b{R}^d$}

\paragraph{Construction of $\kernel'$.}
The number of regions in the decomposition of $\polyt_\eps \setminus \intr \polyt$ is $|\polyt| = O(|\kernel|^{\lfloor d/2 \rfloor})$~\cite{Zie95}.  For each region $\sigma(f)$, we choose a set $K_f$ of $O(\log^{d-2} (1/\eps))$ points such that 
$
\conv{K_f} \subseteq \conv{P} \cap \sigma(f) \subseteq \conv{K_f} \oplus (\eps/2)\ball^d.
$ 
Then we set $\kernel' = \bigcup_f K_f$.

When $f$ is a $0$- or $1$-simplex, we apply Lemmas \ref{lem:point-reg}  and \ref{lem:edge-reg}.  
Otherwise, to construct $K_f$ we create a recursive exponential grid on $f$.  
Specifically, for all $h \in \c{H}$ and $u \in U_{f,h}$ we choose a point $r = \point{\fr}{h}{u} \in \sigma(f) \cap \conv{P}$ and construct an exponential grid $G_{r,f}$ with $r$ as the support point.  
Next, for each $\Delta \in G_{r,f}$ we recursively construct exponential grids.  That is, for all $h' \in \c{H}$ and $u' \in U_{\Delta,h'}$ we choose another support point $r' = \point{\fr'}{h'}{u'}$ and construct an exponential grid $G_{r',\Delta}$ on $\Delta$.  
At each iteration the dimension of the simplex in the exponential grids drops by one.  We continue the recursion until we get $1$-simplices.  
Let $\c{G}_f$ be the union of all exponential grids.  

Let $e$ be a $1$-simplex in $\c{G}_f$.  For each height $h \in \c{H}$ and direction $u \in U_{e,h}$ we choose two points as described in the construction the edge region and add them to $K_f$.  
We also place the support point of each $\Delta \in \c{G}_f$ into $K_f$.  
By construction, for a $k$-simplex $f$, $\c{G}_f$ contains $O(\log^{k-1} (1/\eps))$ simplices and thus $|K_f| = O(\log^{k-1} (1/\eps))$.  Hence $|\kernel'| = O(|\kernel|^{\lfloor d/2 \rfloor} \log^{d-2} (1/\eps))$.  

\paragraph{Proof of correctness.} 
Let $f$ be a $k$-face of $\polyt$ ($0\leq k\leq d-1$). We need to show for any point $q = \point{\fq}{z}{u} \in \conv{P} \cap \sigma(f)$, there is a point $p \in \conv{K_f}$ (specifically $p''_k$) such that $||q - p|| \leq \eps/2$.  

Before describing the technical details (mainly left to the appendix), we first provide some intuition regarding the proof.  For any $q \in \conv{P} \cap \sigma(f)$ we first consider $q_k = \point{\fq}{h}{u}$ where $h \in \c{H}$ is the largest $h<z$.  If $q_k \in \conv{K_f}$ we are done.  If $q_k \notin \conv{K_f}$, we need to find a ``helper point" $q_{k-1}$ for $q_k$.  If $q_{k-1} \notin \conv{K_f}$ we need to recursively find a ``helper point'' $q_{k-2}$ for $q_{k-1}$, and so on until $q_m \in \conv{K_f}$.  
Tracing back along the recursion, we can then prove that $q_k$ (and hence $q$) has a nearby point in $\conv{K_f}$.
Note that we do not prove $q_j$ is near $q_{j-1}$.  Formally:

\begin{lemma}
\label{lem:helper}
We can construct a sequence of helper points $q_j = \point{\fq_j}{h_j}{u_j}$ and simplices $\Delta_j \in \c{G}_f$ with the following invariants:
We can construct a sequence of helper points $q_j = \point{\fq_j}{h_j}{u_j}$ and simplices $\Delta_j \in \c{G}_f$ with the following invariants:
   \begin{itemize} \denselist 
     \item[\emph{(I1)}] $q_j\in \conv{P}$;
     \item[\emph{(I2)}] $h_j \in \c{H}$ and $h_j - h_{j+1} \leq \delta$ (for $j\neq k$);
     \item[\emph{(I3)}] $\fq_j \in \Delta_j \in \c{G}_f$ and the dimension of $\Delta_j$ is $j$; and 
     \item[\emph{(I4)}] $\| u_j -u_{j+1}\| \leq 2\theta$ (for $j\neq k$). 
   \end{itemize}
\end{lemma}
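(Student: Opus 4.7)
The plan is to build the helper sequence $(q_j, \Delta_j)$ inductively by decreasing dimension from $j = k$, using the exponential grid $G_{r_j, \Delta_j}$ at each step to project $\fq_j$ toward the support point and thereby land on a lower-dimensional simplex $\Delta_{j-1} \in \c{G}_f$. For the base case, I would set $\Delta_k = f$, $\fq_k = \fq$, $u_k = u$, and take $h_k$ to be the largest element of $\c{H}$ with $h_k \leq z$; invariant (I1) holds because $q_k$ lies on the segment from $\fq \in \partial \conv{P}$ to $q \in \conv{P}$, and (I3) is immediate (reading $\c{G}_f$ as including $f$ itself at the top level of the recursion).

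For the inductive step, given $(q_j, \Delta_j)$, I would first snap $(h_j, u_j)$ to the nearest pair $(h_j', u_j') \in \c{H} \times U_{\Delta_j, h_j'}$, with $|h_j - h_j'| \leq \delta$ and angular error at most $\theta$, and let $r_j = \point{\fr_j}{h_j'}{u_j'}$ be the corresponding support point already used to build $\c{G}_f$. Next, I would locate $\fq_j$ in the unique slab of $G_{r_j, \Delta_j}$ containing it, take $\Delta_{j-1}$ to be the outer parallel copy $F_i$ (or $F_0$, for the innermost slab) bounding that slab, and set $\fq_{j-1}$ to be the intersection of segment $\overline{\fq_j \fr_j}$ with $\Delta_{j-1}$. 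The convex combination of $q_j$ and $r_j$ whose projection onto $f$ is $\fq_{j-1}$ already lies in $\conv{P}$, and its height and direction are convex combinations of $(h_j, u_j)$ and $(h_j', u_j')$, hence close to both. Finally, I would snap that pair once more into $\c{H} \times U_{\Delta_{j-1}, h_{j-1}}$ to define $q_{j-1} = \point{\fq_{j-1}}{h_{j-1}}{u_{j-1}}$. Invariants (I3), (I2), and (I4) then fall out: (I3) by construction, (I2) by chaining the two $\delta$-snaps, and (I4) by the triangle inequality applied to $u_j \to u_j' \to u_{j-1}$, each step within angle $\theta$.

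The main obstacle is invariant (I1): the final angular snap from $u_j'$ to $u_{j-1}$ rotates the candidate around $\fq_{j-1}$ and can exit $\conv{P}$. I would handle this exactly in the spirit of the construction of $N_{f,h}$ and of Lemma \ref{lem:edge-reg}: rather than rotating an existing point, I would take $q_{j-1}$ to be the $\theta$-net representative of $\Psi_{\Delta_{j-1}, h_{j-1}}$ associated with $u_{j-1}$. Membership in $\conv{P}$ is then automatic, at the price of shifting $\fq_{j-1}$ by at most a $\delta$-comparable amount within $\Delta_{j-1}$, which preserves (I3) and enters the eventual error budget in the same way the edge-region bound does in Lemma \ref{lem:edge-reg}. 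The recursion terminates at dimension $1$, where Lemma \ref{lem:edge-reg} places $q_m$ directly in $\conv{K_f}$, providing the base case for the subsequent back-substitution that will show every $q \in \conv{P} \cap \sigma(f)$ is within $\eps/2$ of $\conv{K_f}$.
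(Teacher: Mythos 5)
Your construction is close, but there is a genuine gap in the inductive step, and the patch you propose to fix (I1) would break the downstream argument.

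The issue is the extra angular snap. After defining $\fq_{j-1}$ as the intersection of $\overline{\fq_j \fr_j}$ with $\Delta_{j-1}$, you propose to snap the direction into $U_{\Delta_{j-1},h_{j-1}}$. That snap is both unnecessary and harmful. The paper's construction never snaps the direction at this stage: the only direction-snap is the one used to \emph{choose the support point} $r_j$ (pick $u_{j-1}' \in U_{\Delta_j,h_j}$ within angle $\theta$ of $u_j$, and let $r_j = \point{\fr_j}{h_j}{u_{j-1}'}$ be the pre-stored support point for that pair). Once $r_j$ is fixed, the direction $u_{j-1}$ is simply \emph{whatever} unit normal makes $q_{j-1}' = \point{\fq_{j-1}}{z'}{u_{j-1}}$ lie on the segment $\overline{q_j r_j}$. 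Because $q_{j-1}'$ is a convex combination of $q_j, r_j \in \conv{P}$, invariant (I1) is automatic; lowering the height from $z'$ to the nearest $h_{j-1} \in \c{H}$ moves toward $\fq_{j-1} \in f \subset \conv{P}$ and so stays inside $\conv{P}$. Invariant (I4) then follows because the normal component of a convex combination of $q_j$ and $r_j$ points in a direction lying between $u_j$ and $u_{j-1}'$, so $\|u_j - u_{j-1}\| \le \|u_j - u_{j-1}'\| \le \theta$. There is no ``candidate that exits $\conv{P}$'' to fix.

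Your patch, replacing $q_{j-1}$ by the $\theta$-net representative of $\Psi_{\Delta_{j-1},h_{j-1}}$, does formally satisfy (I1)--(I4), but it moves $\fq_{j-1}$ to an essentially arbitrary point of $\Delta_{j-1}$, severing the collinearity $\fq_{j-1} \in \overline{\fq_j\fr_j}$. That collinearity is exactly what the exponential-grid properties (G1) and (G2) are invoked on in Lemma~\ref{lem:key}, where the bounds $\|\fq_j - \fq_{j+1}\| \le \delta$ (Case 1) and $\|\fq_j - \fq_{j+1}\|/\|\fq_j - \fp_{F,j}\| \le \rho$ (Case 2) are derived from $\fq_{j-1}$ being the entry point of the ray $\overrightarrow{\fq_j \fr_j}$ into $\Delta_{j-1}$. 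So the helper-point lemma as you'd prove it would be too weak to feed into the rest of the argument. Drop the second snap, define $u_{j-1}$ by the segment, and the difficulty you flagged disappears.
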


\begin{proof} 
Set $q_k$ as above and $\Delta_k = f$.    
Assume that for an index $j\leq k$, $q_j$ has been defined.  
Since (I1) implies $q_j \in \Psi_{\Delta_j, h} \neq \emptyset$, 
there is a direction $u_{j-1}' \in U_{f,h_j}$ such that the angle between $u_j$ and $u_{j-1}'$ is at most $\theta$.  
Let $r_j \in K_f$ be the support point of $\Delta_j$, given $h_j$ and $u_{j-1}'$. 
Let $F$ be the facet of $\Delta_j$ such that $\fq_j \in \conv{F \cup \fr_j}$.  If the segment $\fq_j \fr_j$ does not intersect any simplex in the family of $(j-1)$ simplices induced by $F$ and $r_j$, then let $q_{j-1} = r_j$ and terminate the recursion (see Figure \ref{fig:recursive_def}(a)).  
Otherwise, let $\Delta_{j-1}$ be the first such simplex intersected by the ray $\overrightarrow{\fq_j \fr_j}$, and let $\fq_{j-1}$ be the intersection point (see Figure \ref{fig:recursive_def}(b)).  
  
\begin{figure}[htb!]
\begin{center}
{\small 
\begin{tabular}{ccc}
\includegraphics[scale=1]{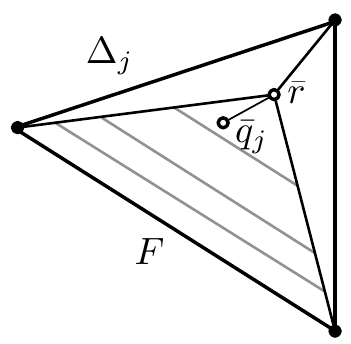} 
& \hspace{1cm} & 
\includegraphics[scale=1]{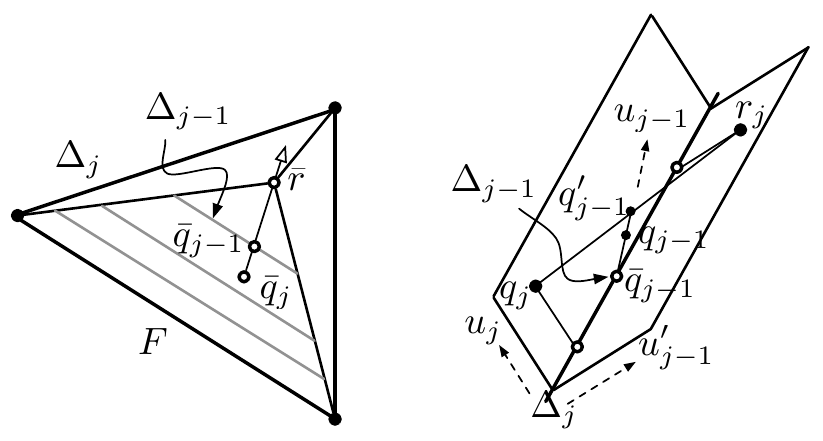} \\
(a)  && (b) 
\end{tabular}
}
\end{center}
\caption{(a) Terminate early: $q_{j-1} = r_j$. (b) Recursive case: $q_{j-1}'$ lies on the segment $q_j r_j$. }
\label{fig:recursive_def}
\end{figure}

To determine $q_{j-1}$ we first find the direction $u_{j-1}$, such that $q_{j-1}' = \point{\fq_{j-1}}{z'}{u_{j-1}}$ lies on the segment $q_j r_j$.  Then $h_{j-1}$ is determined as the maximum $h \in \c{H}$ such that $h \leq z'$. 
We can show (I1) is satisfied because $q_{j-1}'$ must be in $\conv{P}$ because $q_j$ and $r_j$ are, and then $q_{j-1} = \point{\fq_{j-1}}{h_{j-1}}{u_{j-1}}$ lies on segment $q_{j-1}' \fq_{j-1}'$.  
We show (I4) by $||u_j - u_{j-1}|| \leq ||u_j - u_{j-1}'|| \leq \theta$.  Invariant (I2) follows because $z' \geq h_j \cos \theta$ and thus $h_j - z' \leq h_j (1-\cos \theta) \leq \eps (\sin \theta) \leq \delta$.  
Invariant (I3) holds by construction.   

Assume the recursion does not terminate early.  At $j=1$, since $q_1 \in \conv{P}$ and $\Delta_1$ is a line segment, we can apply Lemma \ref{lem:edge-reg} to $\sigma(\Delta_1)$ and find a point $q_0 = \point{\fq_1}{h_1}{u_0} \in \conv{K_f}$ such that $||q_1 - q_0|| \leq 2\delta$ and $||u_1 - u_0|| \leq 2\theta$.  
%
%
This completes the description of the recursive definition of helper points.  
\end{proof}

Let $q_m$ be the last point defined in the sequence. By construction, $q_m \in \conv{K_f}$.  
For each $m \leq j \leq k$, let $p_j = \rotate{q_j}{u_m}$.
We have the following key lemma, which shows that $p_j$ is close to a point $p''_j \in \conv{K_f}$.

\begin{lemma} \label{lem:key}
 For each $j\geq m$, there is a point $p_j'\in \point{\Delta_j}{h_j}{u_m}$ such that 
 \begin{itemize} \denselist 
   \item[(1)] $|| p_j' - p_j|| \leq j \delta$; and  
   \item[(2)] $p_j'' = p_j' - 2(j-m) \delta u_m \in \conv{K_f}$. 
 \end{itemize}
\end{lemma}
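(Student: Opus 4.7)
I would proceed by induction on $j$, ascending from $j = m$ to $j = k$. The base case is immediate: Lemma~\ref{lem:helper} guarantees $q_m \in \conv{K_f}$, so setting $p_m' := q_m$ makes $p_m' \in \point{\Delta_m}{h_m}{u_m}$ hold by invariant~(I3), and both conditions of Lemma~\ref{lem:key} hold with zero slack because the shift coefficient $2(j-m)\delta$ vanishes at $j = m$.

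For the inductive step, suppose $p_j' = \fp_j' + h_j u_m$ and $p_j'' = \fp_j' + (h_j - 2(j-m)\delta)\,u_m \in \conv{K_f}$, with $\fp_j' \in \Delta_j$ and $\|\fp_j' - \fq_j\| \le j\delta$. Let $r_{j+1} = \fr_{j+1} + h_{j+1} u_j' \in K_f$ be the support point used by the helper construction in passing from $q_{j+1}$ to $q_j$, so that $\fq_j$ lies on segment $\fq_{j+1}\fr_{j+1}$ and $\Delta_j$ is the first simplex of $G_{r_{j+1},\Delta_{j+1}}$ met by the ray $\overrightarrow{\fq_{j+1}\fr_{j+1}}$. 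I would define $\fp_{j+1}' \in \Delta_{j+1}$ by the analogous construction: choose $\fp_{j+1}'$ so that $\fp_j'$ lies on $\fp_{j+1}'\fr_{j+1}$ at the same parameter that places $\fq_j$ between $\fq_{j+1}$ and $\fr_{j+1}$. In grid case~(G1) (the boundary-adjacent simplex $F_0$), the bound $\|\fq_{j+1}-\fq_j\| \le \delta$ yields $\|\fp_{j+1}' - \fq_{j+1}\| \le j\delta + \delta = (j+1)\delta$ directly; case~(G2) gives the same bound via the spacing ratio $\rho = \delta/\eps$ and a similar-triangles argument.

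For condition~(2), I would exhibit $p_{j+1}''$ as a convex combination of two elements of $\conv{K_f}$: the inductive point $p_j''$, and a companion point $\tilde r$ obtained from $r_{j+1}$ by first rotating its direction from $u_j'$ to $u_m$ and then descending $2\delta$ along $u_m$. The rotation displaces $r_{j+1}$ by at most $h_{j+1}\|u_j' - u_m\|$, which is $O(\delta)$ per level once one expands $\|u_j' - u_{j+1}\| \le \theta$, the iterated bound $\|u_{j+1} - u_m\| \le 2(j+1-m)\theta$ obtained by summing~(I4), and the relation $\eps\theta = O(\delta)$. The $2\delta$ vertical descent precisely absorbs this drift, placing $\tilde r$ inside $\conv{K_f}$; geometrically this is the higher-dimensional wedge analogue of the triangle $p\,\fp\,\fq_e$ used in the $\b{R}^3$ argument. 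Convex-combining $p_j''$ with $\tilde r$ using the mixing weight that places $\fq_{j+1}$ on the extension of $\fq_j\fr_{j+1}$ then recovers $p_{j+1}''$.

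The main obstacle is exactly this angular-to-vertical bookkeeping. The asymmetric coefficients $j\delta$ in~(1) and $2(j-m)\delta$ in~(2) are calibrated so that the horizontal error grows by at most $\delta$ per level (from either the height snap or the grid displacement enforced by~(G1)/(G2)), while the vertical descent grows by exactly $2\delta$ per level---just enough to absorb the cumulative rotation of $u_m$ against $u_j'$ and the induced height defect. Verifying that the companion point $\tilde r$ actually lies inside $\conv{K_f}$, rather than only near it, relies essentially on the $1$-simplex base case supplied by Lemma~\ref{lem:edge-reg}: its paired extremal points $p^{\pm}_{h,u}$ act as anchors that support the wedge structure at every higher dimension, ensuring the convex-combination step lands inside $\conv{K_f}$ at every level of the induction.
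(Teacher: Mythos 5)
Your base case and the outline of an induction on $j$ match the paper, but the inductive step you sketch for condition~(2) runs into a geometric obstruction. You propose to write $p_{j+1}''$ as a convex combination of the inductive point $p_j''$ (above $\fp_j'$) and a companion point $\tilde r$ sitting above $\fr_{j+1}$. Recall the collinear order established by the helper construction: $\fr_{j+1}$, $\fq_j$, $\fq_{j+1}$ lie on a line in that order ($\fq_j$ is on segment $\fq_{j+1}\fr_{j+1}$). Your own definition of $\fp_{j+1}'$ places $\fp_j'$ between $\fp_{j+1}'$ and $\fr_{j+1}$, so $\fp_{j+1}'$ lies on the \emph{far} side of $\fp_j'$ from $\fr_{j+1}$. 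It is therefore \emph{not} a convex combination of $\fp_j'$ and $\fr_{j+1}$, and consequently no point above $\fp_{j+1}'$ can be a convex combination of $p_j''$ and any point above $\fr_{j+1}$. The paper sidesteps this by using a third anchor $\fp_{F,j}$, the intersection of the ray $\overrightarrow{\fq_j\fq_{j+1}}$ with $\partial f$, which sits on the \emph{opposite} side of $\fq_{j+1}$; the projected point $x_{j+1}$ then does lie between $\fp_j'$ and $\fp_{F,j}$, and the triangle $\fp_j'\,\fp_{F,j}\,p_j''$ (with two vertices at height~$0$ in $f$ and one being the inductive point) provides the containing region.

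Independently of the direction issue, your assertion that $\tilde r \in \conv{K_f}$ is a genuine gap. The support point $r_{j+1}=\point{\fr_{j+1}}{h_{j+1}}{u_j'}$ is in $K_f$, but rotating its direction from $u_j'$ to $u_m$ and then descending by $2\delta$ along $u_m$ produces a point that is not on segment $\fr_{j+1}r_{j+1}$ and need not lie in $\conv{K_f}$: only a single direction $u_j'$ is stored at that support point, and Lemma~\ref{lem:edge-reg} only furnishes the paired extremal points on the terminal $1$-simplices, not at the support points of higher-dimensional $\Delta_j$. Proving $\tilde r \in \conv{K_f}$ would require a nested angular argument at every level, which is exactly the bookkeeping the paper's choice of $\fp_{F,j}$ (a height-$0$ point of $f$, hence automatically in $\conv{K_f}$) eliminates. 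In short, the missing idea is to take the third vertex of the convex-combination triangle on the boundary $\partial f$ in the \emph{outward} direction from $\fq_j$, rather than at the support point $\fr_{j+1}$ in the \emph{inward} direction.
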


\begin{proof}
We prove the lemma by induction on $j$. For $j=m$,  since $p_m=q_m \in \conv{K_f}$, the claim is trivially true by setting $p_m'=p_m$.  Assume the claim is true for some $j\geq m$.  Now consider the case $j+1$.  Let $\fp_{F,j}$ be the intersection of the ray $\overrightarrow{\fq_j\fq_{j+1}}$ with $\partial f$ on facet $F$. 
Let $x_{j+1}$ be the intersection of $\fp_{F,j} \fp_j'$ with the line passing through $\fq_{j+1}$ and parallel to $\overline{\fq_j \fp_j'}$ (see Figure~\ref{fig:induction}). There are two cases:
 
 \begin{figure}[htb!]
\begin{center}
\includegraphics[scale=1]{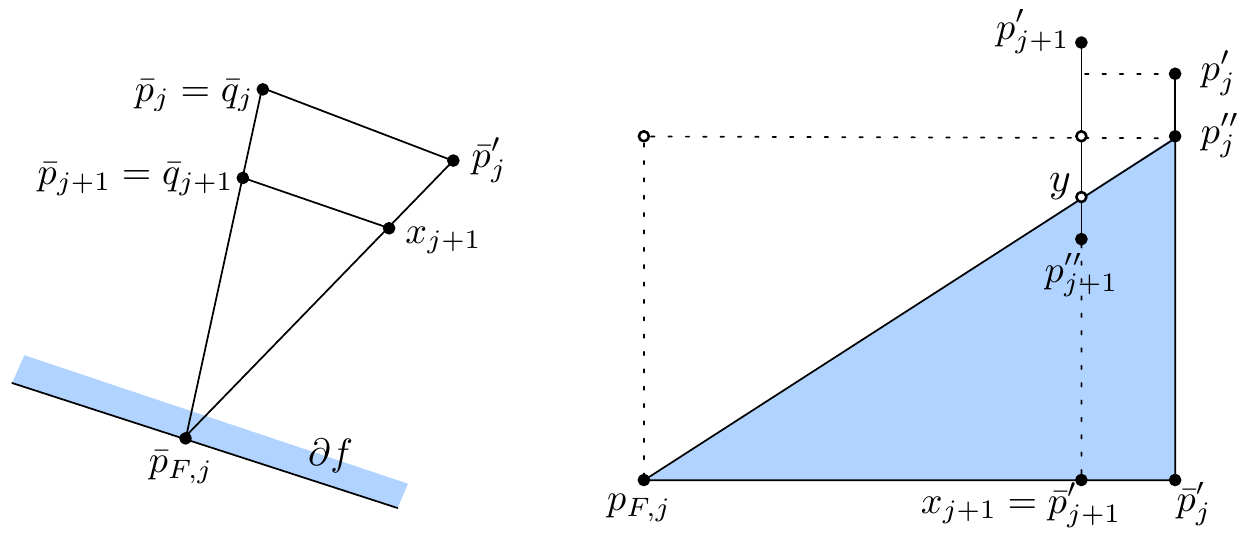}
\end{center}
\caption{The inductive step for proving $||p_{j+1}' - p_{j+1}|| \leq (j+1) \delta$ and $p_{j+1}'' \in \conv{K_f}$.}
\label{fig:induction}
\end{figure}
 
 \smallskip 
 {\em Case 1:} If $\Delta_j$  (such that $\fq_j \in \Delta_j$ according to (I3)) is the closest facet to $F$, then $q_{j+1}$ lies between $F$ and $\Delta_j$.  Thus by (G1), we know that  $||\fq_{j+1} - \fq_j || \leq \delta$. We set $p_{j+1}' = \point{\fp_j'}{h_{j+1}}{u_m}$. As such, 
$$
|| p_{j+1}' - p_{j+1} || = ||\fp_j' -  \fq_{j+1}|| \leq || \fp_j' - \fq_j|| + ||\fq_j - \fq_{j+1}|| \leq j\delta + \delta= (j+1)\delta.
$$
Moreover, since $h_{j+1} -  2(j+1-m)\delta \leq h_j- 2(j-m)\delta$, $p_{j+1}''$ lies on the segment $p_j'' \fp_j'$ and therefore $p_{j+1}''\in\conv{K_f}$.

 \smallskip  
 {\em Case 2:} Otherwise. In this case, we have by (G2)
$$
\frac{||\fp_j' - x_{j+1}||}{||\fp_j' - \fp_{F,j}||} = \frac{|| \fq_j - \fq_{j+1}||}{|| \fq_j - \fp_{F,j}||} \leq  \rho.
$$ 
We set $p_{j+1}' = \point{x_{j+1}}{h_{j+1}}{u_m}$. First observe that
$$
|| p_{j+1}' - p_{j+1} || = || x_{j+1} -  \fq_{j+1} || \leq ||\fp_j' - \fq_j|| \leq \delta. 
$$
Furthermore, let $y$ be the intersection of $\fp_{F,j} p_j''$ with $p_{j+1}'x_{j+1}$ (see Figure~\ref{fig:induction}(right)). We then have
$$
|| p_{j+1}' - y || \leq  (h_{j+1}-h_j) + ||p_j' - p_j''||  + \rho \cdot ||p_j'' - \fp_j'|| < \delta + 2(j-m)\delta +\delta = 2(j+1-m)\delta.
$$
  Therefore $p_{j+1}''$ lies below $y$ and as such $p_{j+1}'' \in \conv{K_f}$ since triangle $\fp_j' \fp_{F,j} p_j'' \in \conv{K_f}$.  
\end{proof}

We now complete the proof of Theorem \ref{theo:approx} with the following lemma.

\begin{lemma}
$|| q - p_k''|| \leq \eps/2$.  
\end{lemma}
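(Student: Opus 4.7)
The plan is to bound $\|q - p_k''\|$ by telescoping through the three intermediate points $q_k$, $p_k$, and $p_k'$ and then applying the invariants from Lemmas~\ref{lem:helper} and~\ref{lem:key} to each step. Concretely, I would write
$$\|q - p_k''\| \leq \|q - q_k\| + \|q_k - p_k\| + \|p_k - p_k'\| + \|p_k' - p_k''\|$$
and bound the four terms separately.

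The first term is immediate from the definition of $q_k = \point{\fq}{h_k}{u}$: since $h_k$ is the largest element of $\c{H}$ with $h_k \leq z$ and consecutive heights in $\c{H}$ differ by $\delta$, we have $\|q - q_k\| = z - h_k \leq \delta$. The last two terms are handed to us directly by Lemma~\ref{lem:key} applied with $j=k$: part~(1) gives $\|p_k - p_k'\| \leq k\delta \leq (d-1)\delta$, and part~(2) gives $\|p_k' - p_k''\| = 2(k-m)\delta \leq 2(d-1)\delta$. Moreover $p_k'' \in \conv{K_f}$ by the same lemma, so this is indeed a point of the target set.

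The only genuinely work-requiring term is the second, and it is the main obstacle. Since $q_k = \fq + h_k u$ and $p_k = \rotate{q_k}{u_m} = \fq + h_k u_m$, we have $\|q_k - p_k\| = h_k\|u - u_m\|$. I would then use invariant (I4) of Lemma~\ref{lem:helper} and a Euclidean triangle inequality along the chain of directions $u = u_k, u_{k-1}, \ldots, u_m$ to get $\|u - u_m\| \leq \sum_{j=m}^{k-1} \|u_j - u_{j+1}\| \leq 2(k-m)\theta \leq 2(d-1)\theta$. The subtle point is the conversion from the angle $\theta = 2\arcsin(\delta/(2\eps))$ to a length: the elementary bound $\arcsin(x) \leq (\pi/2) x$ for $x\in[0,1]$ gives $\theta \leq \pi\delta/(2\eps)$, and combining this with $h_k \leq \eps$ yields $\|q_k - p_k\| \leq \eps \cdot 2(d-1)\theta \leq \pi(d-1)\delta$.

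Summing the four bounds gives $\|q - p_k''\| \leq \delta + \pi(d-1)\delta + (d-1)\delta + 2(d-1)\delta \leq C\,d\,\delta$ for an absolute constant $C < 12$ (valid for all $d \geq 2$). Plugging in the global choice $\delta = \eps/(12d)$ from the setup yields $\|q - p_k''\| \leq \eps/2$, which proves the lemma. Combined with the earlier observation that $p_k'' \in \conv{K_f} \subseteq \conv{\kernel'}$, this certifies that every point of $\conv{P} \cap \sigma(f)$ is within $\eps/2$ of $\conv{\kernel'}$, and summing the size bound $|K_f| = O(\log^{k-1}(1/\eps))$ over the $O(|\kernel|^{\lfloor d/2 \rfloor})$ faces of $\polyt$ completes the proof of Theorem~\ref{theo:approx}.
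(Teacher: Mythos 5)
Your decomposition is exactly the one the paper uses --- telescoping through $q_k$, $p_k$, $p_k'$, $p_k''$ and bounding each term using (I4) and Lemma~\ref{lem:key} --- so this is not a different route. However, the final arithmetic does not close, and the error is not cosmetic.

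There are two issues stacked on top of each other. First, the endgame is off by a factor of two: with $\delta = \eps/(12d)$ one has $12 d\delta = \eps$, so a bound of the form $C\,d\,\delta$ with $C<12$ only gives $\|q - p_k''\| < \eps$, not $\leq \eps/2$. The constant you actually need is $C \leq 6$, since $6d\delta = \eps/2$. Second, your bound on the rotation term is slightly too weak to reach $C\le 6$. You estimate $\|q_k - p_k\| \le h_k\,\|u-u_m\| \le \eps \cdot 2(d-1)\theta$ and then loosen $\arcsin(x) \leq (\pi/2)x$ to get $\pi(d-1)\delta$. Adding the other three terms gives $\delta\bigl[1 + (\pi+3)(d-1)\bigr]$, and the coefficient $\bigl[1 + (\pi+3)(d-1)\bigr]/d$ tends to $\pi+3 \approx 6.14 > 6$ as $d$ grows, so the inequality $\le 6d\delta$ fails once $d$ is around $37$ or larger. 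The paper avoids this by going through the chord-length formula and subadditivity of $\sin$ rather than $\arcsin$: it writes $\|p_k - q_k\| \le 2\eps\sin\bigl(\tfrac{1}{2}\cdot d\cdot 2\theta\bigr) \le 2\eps\cdot d\sin\theta \le 2\eps\cdot d\cdot 2\sin(\theta/2) = 2\eps\cdot d\,(\delta/\eps) = 2d\delta$, which then sums with $\delta$, $d\delta$, $2d\delta$ to $\delta + 5d\delta \le 6d\delta = \eps/2$ for every $d\ge1$. If you replace your $\arcsin$ step by this $\sin$-based estimate you recover the paper's argument verbatim; as written, the constant does not work out.
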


\begin{proof}
%
For $q = \point{\fq}{z}{u}$ and $q_k = \point{\fq}{h}{u}$, since $h\leq z \leq h+\delta$ then $||q - q_k|| \leq \delta$.    

Since $p_k = \rotate{q_j}{u_m}$, then invariant (I4) implies that $||u_m - u_k|| \leq (k-m)2\theta \leq d 2\theta$ and hence $||p_k - q_k|| \leq 2 \eps \sin((1/2) d 2\theta) \leq \eps 2(d \delta/\eps) \leq 2d \delta$.  

Finally, for $j=k$, Lemma~\ref{lem:key} implies that $||p_k' - p_k|| \leq k\delta\leq d\delta $ , that $||p'_k - p''_k|| \leq 2(k-m)\delta \leq 2d \delta$, and $p_k'' \in\conv{K_f}$. 

It follows that
\begin{eqnarray*}
|| q - p_k''|| & \leq & ||q - q_k|| +||q_k - p_k|| +||p_k - p_k'|| + ||p_k' - p_k''|| \\ 
                  & \leq & \delta + 2d\delta + d\delta + 2d\delta \\
                  & \leq & 6d\delta = \eps/2,
\end{eqnarray*}
as desired.
\end{proof}

\subsection{Remarks}
\begin{itemize} \denselist
\item[(1)] For $d=2, 3$,  $\opt{P}{\eps/2}$ is only a factor of $O(1)$ and $O(\log(1/\eps))$, respectively, larger than $\opt{P}{\eps}$; therefore, the sizes of optimal $\eps$-kernels in these dimensions are relative stable. However, for $d \geq 4$, the stability drastically reduces in the worst case because of the superlinear dependency on $\opt{P}{\eps}$.

\item[(2)] Neither the upper nor the lower bound in the theorem is tight. For $d=3$, we can prove a tighter lower bound of $\Omega\bigl(\opt{P}{\eps} \log (1/ (\eps \cdot \opt{P}{\eps}))\bigr)$. We conjecture in $\reals^d$ that 
$$
\opt{P}{\eps/2} = \Theta\bigl(\opt{P}{\eps}^{\lfloor d/2 \rfloor} \log^{d-2} (1/(\eps^{(d-1)/2} \cdot \opt{P}{\eps})) \bigr).
$$ 
\end{itemize}


\bibliographystyle{mystyle}
\bibliography{bib-coreset}


\end{document}